\newtheorem{theorem}{Theorem}[section]
\newtheorem{definition}[theorem]{Definition}
\newtheorem{lemma}[theorem]{Lemma}
\newtheorem{remark}[theorem]{Remark}
\newtheorem{corollary}[theorem]{Corollary}
\newtheorem{condition}[theorem]{Condition}
\newcommand{\rr}{\rightarrow}
\newcommand{\comment}[1]{}
\newcommand{\td}[1]{\tilde{#1}}
\newcommand{\Prob}{\mathbb{P}}
\newcommand{\E}{\mathbb{E}}
\newcommand{\R}{\mathbb{R}}
\newcommand{\F}{\mathbb{F}}
\newcommand{\Ft}{\mathcal{F}}
\newcommand{\A}{\mathcal{A}}
\newcommand{\Lp}{\mathcal{L}}
\newcommand{\Var}{\text{var}}
\title{Stationarity of Bivariate Dynamic Contagion Processes}
\author{Angelos Dassios\footnote{Department of Statistics, London School of Economics and Political Science, London WC2A 2AE, UK. Email: a.dassios@lse.ac.uk}, ~ Xin Dong\footnote{Department of Mathematics, Imperial College, London SW7 2AZ, UK. Email: x.dong10@imperial.ac.uk}}
\date{}
\begin{document}

\maketitle

\begin{abstract}
The Bivariate Dynamic Contagion Processes (BDCP) are a broad class of bivariate
point processes    characterized  by the intensities  as a general class
of piecewise deterministic Markov processes. The BDCP describes a rich dynamic
structure where  the system is under the influence of both external and internal
 factors modelled by a shot-noise Cox process and a generalized Hawkes process
respectively. In this paper we  mainly address the stationarity issue for
the BDCP, which is important in  applications. We investigate the stationary
distribution by applying the the Markov theory on the branching system approximation
representation of the BDCP. We
find the  condition under which there exists a unique stationary distribution
of the BDCP intensity and the resulting BDCP has stationary increments. Moments
of the stationary intensity are provided by using the Markov property.
\end{abstract}

\bigskip\noindent{\bf Keywords} \
 Bivariate dynamic contagion process, piecewise deterministic Markov processes, Stationarity.

\bigskip\noindent{\bf Mathematics Subject Classification (2010)} \ 
60G55, 60F05, 60G35 




\section{Introduction}
\comment{
The multivariate point processes modelling is prominent as it characterizes
event arrivals  within a system and has wide applications.  For instance,
stochastic models are needed for events such as  company bankruptcies,
insurance claim arrivals, diseases incidences, machine failures and so on.
How to model the dynamic of the point process that is capable to capture
a rich dependence structure becomes an essential question. Moreover, it is
well-known that the stationarity is   an important and common assumption
in many statistical applications. Hence whether there exists a  version of
point process with stationary increments and the stationary intensity needs
to be investigated. Furthermore, stationarity is also one of the most important
probability
properties in stochastic
process study.

In order to describe a system with a rich dynamic reflecting both external
macro-impact and internal contagion effect, we introduce the Bivariate Dynamic
Contagion
Processes (BDCP). The BDCP is a broad family of bivariate point processes
with
intensity processes specified as non-diffusion Piecewise Deterministic Markov
Processes (PDMP) studied by Davis\cite{Davis1984}. 

The BDCP  covers two distinct important classes of point processes. The first
class is the shot-noise Cox processes that usually describe point process
systems  under the impact
from external factors. The shot-noise Cox processes are studied by Cox and
Isham\cite{Cox1980},  M{\o}ller\cite{Moller2003}, Dassios and Jang\cite{DJ2003},
and Kl{\"u}ppelberg and Mikosch\cite{Kluppelberg1995} for example. The class
has a wide range of applications. For instance, it is adopted in modelling
insurance claim arrivals and
 ruin probabilities  by Altmann et al.\cite{Altmann2008}, Albrecher
and Asmussen\cite{Albrecher2006} and Macci and Torrisi\cite{Torrisi2011}.
The second class is the  generalized Hawkes processes which have mutually-exciting
intensities. In this class, jumps in the point process bring the internal
feedback into
the underlying intensity process and the impact factor is modelled by upward
jumps with random marks. Thus this class
is capable to characterize the clustering and contagion effect.
Hawkes processes are introduced by Hawkes and Oakes\cite{HO1974}, and they
are studied
by Daley and Vere-Jones\cite{DaleyVJones2007}, Liniger\cite{Liniger2009}
and Embrechts et al.\cite{Embrechts2011}. Recently Hawkes processes are extensively
applied
in finance and insurance modelling as Hautsch\cite{BH2009}, A{\"\i}t-Sahalia
et al.\cite{Ait2010}, Bacry et al.\cite{BDHM2013} and Errais et
al.\cite{EK2010}.  

In the univariate case, Dassios
and Zhao\cite{DZ2011} introduce  the Univariate Dynamic Contagion
Processes (UDCP) that is also with impact modelled by both external and internal
factors. They
can be used in credit risk and insurance modelling as Dassios and Zhao\cite{DassiosZhao2011Credit}
and \cite{DZ2012}. However in practice, a univariate model is not sufficient
to model an inhomogeneous population with dependence. In order to address
this issue, we introduce the bivariate system without loss of
generality. Note that the dimension extension is not trivial as the joint
distribution of dependent marginals needs to be explored. The dependency
between marginals can be characterized
in a few ways. Dassios and Jang\cite{DassiosJang2013} studied a bivariate
system with correlated shot-noise components and
 self-exciting components. However, the dependency due
to the contagion effect between marginals is missing. In BDCP, the contagion
effect between marginals is modelled by the cross-exciting components. Note
that
the cross-exciting dependency introduces a loop
structure  that makes the system difficult to be decoupled. Hence it is fundamentally
different from the univariate case.  

Moreover, the BDCP is also different from the bivariate Hawkes process since
it is not obvious how the additional external factor modelled by a Cox
process and randomness from jump sizes affect the probability properties
of the system.

Once the dynamic of the point process is specified, the stationarity becomes
an important issue to address. As mentioned before, many
problems can be simplified based on the stationarity assumptions. With stationarity
of the intensity, Dassios and Dong\cite{DD2013}  explore the diffusion
approximation of BDCP with filtering applications. Moreover,  Dassios
and Zhao\cite{DZ2012} discuss the ruin probability in insurance modelling
using the UDCP. Previously, Costa\cite{Costa1990} discusses the stationarity
condition of
the piecewise deterministic Markov processes in general. Dassios
and Zhao\cite{DZ2011} show the existence of stationary distribution for UDCP.
Br\'emaud and Massoulie\cite{BM1996} discuss the stationarity and stability
of Hawkes processes.  Furthermore, since the BDCP can be seen as a limit
of finite
dimensional affine processes where the dimension is tending to infinity,
hence it is itself an interesting case that is not dealt in the affine literature
so far. One can look into Duffie et al.\cite{Duffie2002}, Keller-Ressel et
al.\cite{Keller2011} and a few others for the research on affine processes.
We note that  the stationarity results are only available
in a few cases of diffusion affine processes.
For example, the discussion of the stationarity of two-factor diffusive affine
processes can be found in Glasserman and Kim\cite{Glasserman2010} and Barczy
et al.\cite{Barczy2013}.

In this paper, the analysis of the BDCP intensity is based on the approximation
of the  finite branching system resulting from the cluster-based representation.
We apply the PDMP theory developed by Davis\cite{Davis1984} on
the branching system to explore the limiting distribution. Moreover, the
link between the stationary distribution and limiting distribution is explored
based on the analysis of
the limiting distribution as $t\rr\infty$.     

The definition and the cluster representation of the BDCP are provided in
Section~\ref{sec: ModelRepresentation}, where we introduce a finite system
 $(\lambda^{1,n},\lambda^{2,n})$ that approximates the BDCP intensity $(\lambda^1,\lambda^2)$
and  a finite joint system  $(\Lambda^{(1)},\ldots,\Lambda^{( 2n)})$ resulting
from a dimension translation. Then in Section~\ref{sec: FiniteSystem}, starting
from the finite joint system $(\Lambda^{(1)},\ldots,\Lambda^{(
2n)})$ which is a Markov process and  de-coupled, we apply the PDMP theory
  to obtain the limiting distribution as $t\rr\infty$ in terms of the Laplace
transform. With the the branching  system approximation as $n\rr \infty$,
the condition of the existence of the limiting distribution of  $(\lambda^1,\lambda^2)$
is investigated. The limiting distribution result can be found in Theorem~\ref{thm:
LimitingDistributionAll}   and the existence condition is the Condition~\ref{asm:
stationaryCondition}. In Section~\ref{sec: StationaryDistribution}, again
starting from $(\Lambda^{(1)},\ldots,\Lambda^{(
2n)})$, we provide a stationarity condition in Lemma~\ref{thm: statCondTruncSys},
which is in terms of the Laplace transform based on the Markov theory. As
we have found in Section~\ref{sec:
FiniteSystem} the limiting distribution of the finite joint system that is
a natural candidate, we confirm the limiting distribution is also the stationary
distribution for $(\Lambda^{(1)},\ldots,\Lambda^{(
2n)})$ and $(\lambda^{1,n},\lambda^{2,n})$ in Theorem~\ref{prop: AsyEqStat}
and Corollary~\ref{cor: Stationarity_BDCP_N} respectively. The approximation
argument is applied to conclude the stationarity of tbe BDCP intensity $(\lambda^1,\lambda^2)$
in Theorem~\ref{thm: stationarity_BDCP} and also BDCB $(N^1,N^2)$ in Corollary~\ref{cor:
Stationarity_BDCP_N}. In Section~\ref{sec: StationaryMoments},  we  provide
the stationary moments of the intensity process  $(\lambda^1,\lambda^2)$.
We conclude in Section~\ref{sec: Conclusion}. 
}

Multivariate point processes are used to model event arrivals of different
types within a system. There are many potential applications; stochastic
models are needed for events such as company bankruptcies, insurance claim
arrivals, disease incidence, machine failures and many others. Modelling
the point process in way that capture a rich dependence structure becomes
an essential problem. Moreover, stationarity is an important and common assumption
in many statistical applications  and is also one
of the most important probability properties in stochastic process
study. Hence, the existence of a version of a point process with stationary
increments and stationary intensity needs to be investigated.

In order to describe a system with a rich dynamic reflecting both external
impact and internal contagion effect, we introduce the Bivariate Dynamic
Contagion Process (BDCP). The BDCP is a broad family of bivariate point processes
with intensity processes specified as non-diffusion Piecewise Deterministic
Markov Processes (PDMP) studied by Davis\cite{Davis1984} but also incorporating
a feedback mechanism (internal contagion). The BDCP covers two distinct important
classes of point processes. The first class consists of shot-noise Cox processes
that usually describe point process systems under the impact from external
factors. Shot-noise Cox processes are studied by Cox and
Isham\cite{Cox1980},  M{\o}ller\cite{Moller2003}, Dassios and Jang\cite{DJ2003},
and Kl{\"u}ppelberg and Mikosch\cite{Kluppelberg1995} for example. The class
has a wide range of applications. For instance, it is adopted in modelling
insurance claim arrivals and ruin probabilities by Altmann et al.\cite{Altmann2008},
Albrecher
and Asmussen\cite{Albrecher2006} and Macci and Torrisi\cite{Torrisi2011}.
 The second class consists of generalized Hawkes process which have mutually-exciting
intensities. In this class, jumps in the point process bring the internal
feedback into the underlying intensity process and the impact factor is modelled
by upward jumps with random marks. This class is capable to model clustering
and contagion effects. Hawkes processes are introduced by Hawkes and Oakes\cite{HO1974},
and they
are studied
by Daley and Vere-Jones\cite{DaleyVJones2007}, Liniger\cite{Liniger2009}
and Embrechts et al.\cite{Embrechts2011}. Recently Hawkes processes are extensively
applied
in finance and insurance modelling as Hautsch\cite{BH2009}, A{\"\i}t-Sahalia
et al.\cite{Ait2010}, Bacry et al.\cite{BDHM2013} and Errais et
al.\cite{EK2010}. 

In the univariate case,  Dassios
and Zhao\cite{DZ2011} introduce Univariate Dynamic Contagion Processes (UDCP)
that include impact by both external and internal factors. They can be used
in credit risk and insurance modelling as in Dassios and Zhao\cite{DassiosZhao2011Credit}
and \cite{DZ2012}.  However, in practice a univariate model is not sufficient
to model a heterogeneous population with rich dependence structure. In order
to address this issue, we introduce a bivariate system. The dependency between
marginals can be characterized in a few ways. Dassios and Jang\cite{DassiosJang2013}
 studied a bivariate system with correlated shot-noise components and self-exciting
components but the dependence due to the cross-exciting contagion effect
is missing.
We address this in our definition of BDCP. Note that the cross-exciting dependency
introduces a loop structure that makes the system difficult to be decoupled.
Hence it is fundamentally different from the univariate case. Moreover, the
BDCP is also different from the bivariate Hawkes process since it is not
obvious how the additional external factor modelled by a Cox process and
randomness from jump sizes affect the probability properties of the system.
In principle, applications of Hawkes or shot-noise processes can be extended
using BDCP to incorporate richer structure.

Once the dynamic of the point process is specified, stationarity becomes
an important issue to address. It is a reasonable assumption and many problems
can be
simplified based on it. With stationarity of the
intensity, Dassios
and Zhao\cite{DZ2012} discuss the ruin probability in insurance modelling
using the UDCP. Moreover, Dassios and Dong\cite{DD2013}    explore the diffusion
approximation
of BDCP with filtering applications based on the stationarity assumption.

Previously, Costa\cite{Costa1990} discusses the stationarity condition of
piecewise deterministic Markov processes in general. Dassios
and Zhao\cite{DZ2011} show the existence of a stationary distribution for
UDCP. Br\'emaud and Massoulie\cite{BM1996} discuss the stationarity and stability
of Hawkes processes. Furthermore, in our
approach the BDCP can be seen as a limit of finite dimensional affine processes
where the dimension is tending to infinity. This is itself an interesting
case that is not dealt in the affine literature so far. One can look into
Duffie et al.\cite{Duffie2002}, Keller-Ressel et
al.\cite{Keller2011} and a few others for the research on affine processes.
We note that the stationarity results are only available in a few cases of
diffusion affine processes. For example, the discussion of the stationarity
of two-factor diffusive affine processes can be found in Glasserman and Kim\cite{Glasserman2010}
and Barczy et al.\cite{Barczy2013}.

In this paper, the analysis of the BDCP intensity is based on the approximation
of the  finite branching system resulting from the cluster-based representation.
We apply the PDMP theory developed by Davis\cite{Davis1984} on
the branching system to explore the limiting distribution as $t\rr\infty$.
Moreover, the
link between the stationary distribution and limiting distribution is explored.

The definition and the cluster representation of the BDCP are provided in
Section~\ref{sec: ModelRepresentation}, where we introduce a finite system
 $(\lambda^{1,n},\lambda^{2,n})$ that approximates the BDCP intensity $(\lambda^1,\lambda^2)$
and  a finite joint system  $(\Lambda^{(1)},\ldots,\Lambda^{( 2n)})$ resulting
from a dimension translation. Then in Section~\ref{sec: FiniteSystem}, starting
from the finite joint system $(\Lambda^{(1)},\ldots,\Lambda^{(
2n)})$ which is a Markov process and  de-coupled, we apply the PDMP theory
  to obtain the limiting distribution as $t\rr\infty$ in terms of the Laplace
transform. With the the branching  system approximation as $n\rr \infty$,
the condition of the existence of the limiting distribution of  $(\lambda^1,\lambda^2)$
is investigated. The limiting distribution result can be found in Theorem~\ref{thm:
LimitingDistributionAll}   and the existence condition is the Condition~\ref{asm:
stationaryCondition}. In Section~\ref{sec: StationaryDistribution}, again
starting from $(\Lambda^{(1)},\ldots,\Lambda^{(
2n)})$, we provide a stationarity condition in Lemma~\ref{thm: statCondTruncSys},
which is in terms of the Laplace transform based on the Markov theory. As
we have found in Section~\ref{sec:
FiniteSystem} the limiting distribution of the finite joint system that is
a natural candidate, we confirm the limiting distribution is also the stationary
distribution for $(\Lambda^{(1)},\ldots,\Lambda^{(
2n)})$ and $(\lambda^{1,n},\lambda^{2,n})$ in Theorem~\ref{prop: AsyEqStat}
and Corollary~\ref{cor: Stationarity_BDCP_N} respectively. The approximation
argument is applied to conclude the stationarity of tbe BDCP intensity $(\lambda^1,\lambda^2)$
in Theorem~\ref{thm: stationarity_BDCP} and also BDCB $(N^1,N^2)$ in Corollary~\ref{cor:
Stationarity_BDCP_N}. In Section~\ref{sec: StationaryMoments},  we  provide
the stationary moments of the intensity process  $(\lambda^1,\lambda^2)$.
We conclude in Section~\ref{sec: Conclusion}.

\section{The Model}\label{sec: ModelRepresentation}

\subsection{Model}\label{sec: Model}
Let $(\Omega,\Ft,\Prob)$ be a probability space, on which we introduce the
Bivariate Dynamic Contagion Processes (BDCP) as a class
of bivariate point processes $N_{t} = (N^1_t,N^2_t)$ defined on $\R_{+}$.
Let $\F$ be a filtration such that $N$ is $\F$-adapted. For $i=1,2$, $$N^i_t
= \sum_{n\geq 1}1_{\{T^i_n\leq t\}}$$ where $\{T^i_n\}_{n\geq0}$ are orderly
$\F$-stopping times representing event arrival times with $T_0=0$. By the
Doob-Meyer decomposition, there exists a unique non-decreasing process $A$
starting at $0$, such that $N - A$ is an $\F$-local martingale. Suppose that
there exists a non-negative, $\F$-predictable and integrable intensity process
$\lambda$, s.t. for every $t\geq 0$, $A_t = \int_0^t \lambda_s ds$ $a.s.$
 We assume that the filtration $\F$ satisfies the usual condition. 

For $N_{t}$ as a BDCP, its intensity processes
$\lambda_t = (\lambda^1_t,\lambda^2_t)$ is specified as a Piecewise Deterministic
Markov Processes (PDMP) introduced by Davis\cite{Davis1984}.

\begin{definition}[Bivariate Dynamic Contagion Processes (Intensity-based)]
The intensity $ \lambda_t = (\lambda^1_t,\lambda^2_t)$ of the BDCP $N_t =(N^1_t,N^2_t)$
with $t\in
\mathbb{R}_+$ is defined by  
\begin{equation}\label{eqn: Intensity_BDCP_Def}
\begin{aligned}
\lambda^1_t &= \lambda_0^1 e^{-\delta_1 t} + \sum_{S^{1}_j < t}Y^1_{j} e^{-\delta_1
(t-S^1_j)} + \sum_{T^1_j<t}^{}Z^{1,1}_{j}e^{-\delta_1 (t-T^1_j)}+\sum_{T^2_j<t}^{}Z^{1,2}_{j}e^{-\delta_1
(t-T^2_j)},\\
\lambda^2_t &= \lambda_0^2 e^{-\delta_2 t} + \sum_{S^2_j<t}^{}Y^2_{j} e^{-\delta_2
(t-S^2_j)} + \sum_{T^1_j<t}^{}Z^{2,1}_{j}e^{-\delta_2 (t-T^1_i)} + \sum_{T^2_j<t}^{}Z^{2,2}_{j}e^{-\delta_2
(t-T^2_j)}.
\end{aligned}
\end{equation}
For $k,k'=1,2$, 
\begin{itemize}
\item $\lambda^k_0 \geq 0$ is the initial intensity at time $t=0$;
\item $\delta_k >0$ is the constant rate of exponential decay;
\item $\{S^k_j\}_{j=1,2,\ldots}$ are jump times of $M^k_t$, which is  a Poisson
process with the constant intensity $\rho_k$.
 $\{Y^k_j\}_{j=1,2,\ldots}$
are associated i.i.d jump sizes with the distribution function $H_k(\cdot)$
and Laplace transform $\hat{h}_k(\cdot)$;
\item $\{T^k_j\}_j$ are jump times of $N^k$, and $\{Z^{k,k'}_j\}_{j=1,2,\ldots}$
are i.i.d. jump sizes with  distribution function $G_{k,k'}(\cdot)$ and
Laplace transform $\hat{g}_{k,k'}(\cdot)$;

\item $\{S^k_j\}_{j=1,2,\ldots}$ and $\{T^k_j\}_{j=1,2,\ldots}$ are independent
of $\{Y^k_j\}_{j=1,2,\ldots}$ and $\{Z^{k,k'}_j\}_{j =1,2,\ldots}$.  
\end{itemize}
\label{def:BDCP_intensity}
\end{definition}

Due to the exponential decay, $\lambda_t = (\lambda^1_t,\lambda^2_t)$
is a piecewise deterministic Markov process (PDMP). 
For $k, k'=1,2$, the marked point process  $\sum_{S^{k}_j < t}Y^k_{j} e^{-\delta_k(t-S^k_j)}$
characterizes the dependence on an external factor. $\sum_{T^{k}_j < t}Z^{k,k}_j
e^{-\delta_k
(t-T^k_i)}$ and $\sum_{T^{k'}_j < t}Z^{k,k'}_{j}e^{-\delta_k (t-T^{k'}_j)}$
for $k'\neq k$ characterize the internal dependence due to the self-exciting
and cross-exciting effect respectively. Note that the impact factor  is modelled
by random marks that are independent of $N$.

From the intensity-based definition, the BDCP is a broad class of point processes
covering two distinct and important point process classes. The first class
is the   shot-noise Cox
processes that can be obtained by setting $Z^{k,k'}_j \equiv 0$ for all $j\geq
1$, $k,k'\in\{1,2\}
$. The second class is the bivariate Hawkes processes with exponential decay
 obtained by setting $Y^k_j \equiv 0$ and $Z^{k,k'}_j$ as constants for all
$j$, $k,k'\in\{1,2\}$. We assume the following condition always holds.

\begin{condition}\label{cond: finite_mean_marks}
\begin{enumerate}
\item[(C1)] For $k,k'=1,2$, all random marks $\{Y^k_j\}_{j}$ and $\{Z^{k,k'}_j\}_{j}$
 have the finite first moments. i.e. $\mu_{H^k}$, $\mu_{G^{k,k'}}$ are finite.\\
\end{enumerate}
\end{condition}
One can easily check that under $(C1)$ in Condition~\ref{cond: finite_mean_marks},
$\int_0^t \lambda_s ds<\infty$ $a.s.$ for every $t\geq 0$, thus the BDCP
$N$ is non-explosive. \\

\subsection{Stationarity}
First recall the definition of the stationary distribution and
the stationary process for Markov processes       (e.g. Section 9.4 in Ethier
and Kurtz\cite{EK1986}). Suppose that a martingale problem for  $\A$ on
state space $E$ is well defined, then $\mu$ as a probability measure on $E$
is a stationary distribution of $\A$ if every solution $X$ of the martingale
problem for $(\A,\mu)$ is a stationary process, that is, if $\Prob\left(X_{t+s_1}\in
\Gamma_1,\ldots, X_{t+s_k}\in\Gamma_k \right)$ is independent of $t\geq 0$
for all $k\geq 1$, $0\leq s_1 \leq \cdots \leq  s_k$, and $\Gamma_1,\ldots,
\Gamma_k \in \mathcal{B}(E)$. Moreover, $\mu$ is a stationary distribution
for $\A$ if and only if $X_t$ has the distribution $\mu$ for all $t\geq 0$.

For a Markov process $X$ that solves the martingale problem for $(\mathcal{A},\mu)$
with the domain $\mathcal{D}(\mathcal{A})$, Proposition~9.2, Chapter 4 in
Ethier and Kurtz \cite{EK1986} provides
a stationarity theorem:
\begin{equation} \label{eqn: Stationarity_Markov}
\text{The stationary distribution $\mu$ exists if and only if for $f\in\mathcal{D}(\A)$,
} \int_E \A f(x) d\mu (x) = 0.
\end{equation}

In this paper, we find a sufficient condition $(C2)$ in Condition~\ref{asm:
stationaryCondition}, under which, there exists a unique stationary distribution
of $\lambda$ and also a unique stationary distribution for the BDCP $N$ in
Section~\ref{sec: StationaryDistribution}.

\subsection{The Branching Structure}\label{sec: Representation_Branching}


Note that the intensity process $\lambda_t = (\lambda^1_t,\lambda^2_t)$ has
a representation as a cluster process with a branching structure.
The representation is helpful in the following analysis.

\begin{definition}[Bivariate Dynamic Contagion Processes (Cluster-based)]

A bivariate dynamic contagion processes $N=(N^1,N^2)$ is a two-type Poisson
cluster
process $(C^1, C^2)$ with the branching interpretation as follows:
\begin{itemize}

\item For $k=1,2$, the cluster centers of type $k$ are immigrants arrived
at $\{T^{k,(0)}_m\}_{m=1,2,\ldots}$ as a shot-noise process with intensity
$\lambda^{k,(0)}_t =  \lambda^{k}_0
e^{-\delta_k t} +  \sum_{m=1}^{M^{k}_t} Y^k_m e^{-\delta_k (t-S^{k}_m)}$,
where $\{S^{k}_m\}_m := T^{k,(0)}_m $. 
\item Each cluster center of type $k$ at $T^{k,(0)}_m$ generates a cluster
$C^k_m$ consisting of events of type $k$ and also the cluster center itself.
Then the cluster $C^k = \cup_{m=1}^\infty C^k_m$. In branching term, each
$C^k_m$ is the set of type $k$ immigrant arrived at $T^{k,(0)}_m$ and its
offspring of all generations. 

\item Denote the set of the n-th generation offspring of type $k$ in the
cluster $C^k_m$ as $C^{k,(n)}_m$, then the collection of the n-th generation
offspring of type $k$ from all clusters is $C^{k,(n)} = \cup_{m=1}^\infty
C^{k,(n)}_m$. Denote the offspring birth process in $C^{k,(n)}$ as $N^{k,(n)}_t$
with the arrival times $\left\{T^{k,(n)}_j \right\}_j$ and the intensity
$\lambda^{k,(n)}_t$. Recursively, the $(n+1)$-th generation from all clusters
are generated with the intensity $$\lambda^{k,(n+1)}_t = \sum_{j=1}^{N^{1,(n)}_t}
Z_j^{k,1,(n)}
e^{-\delta_k (t-T^{1,(n)}_j)} +  \sum_{j=1}^{N^{2,(n)}_t} Z_j^{k,2,(n)}
e^{-\delta_k (t-T^{2,(n)}_j)},$$ where for $k,k' = 1,2$, the random marks
$Z_j^{k,k',(n)}$ are independent copies of $Z_j^{k,k'}$ for all $n$.

\item Collect all individuals of type $k$ up to the n-th generation from
all clusters, denoted as $C^{k,n}$, then $C^{k,n} = \cup_{j=1}^n C^{k,(j)}$.
The offspring birth process in $C^{k,n}$ is $N^{k,n}_t$ with birth times
$\left\{T^{k,n}_j\right\}_j$ and the intensity process $\lambda^{k,n}_t$.
Hence, we have
$$N^{k,n}_t = \sum_{i=0}^n N^{k,(i)}_t,\quad \lambda^{k,n}_t = \sum_{i=0}^k
\lambda^{k,(i)}_t,\quad \left\{T^{k,n}_j\right\}_j = \cup_{i=0}^n \left\{T^{k,(i)}_j\right\}_j.$$
\end{itemize}

\end{definition}

Clearly,
\begin{equation*}
C^k = \lim_{n\rr \infty}C^{k,n}=\lim_{n\rr \infty} \cup_{j=1}^n C^{k,(j)}
= \lim_{n\rr \infty} \cup_{j=1}^n
\cup_{m=1}^{\infty} C^{k,(j)}_m. 
\end{equation*} 

By construction, all clusters $\{C^k_m\}_{m=1,2,\ldots}$ are independent.
Moreover, we have pathwise,
\begin{eqnarray*}
N^{k}_t &=& \lim_{n\rr\infty}N^{k,n}_t = \lim_{n\rr\infty} \sum_{i=0}^n N^{k,(i)}_t,
\\
\lambda^{k}_t &=& \lim_{n\rr\infty}\lambda^{k,n}_t = \lim_{n\rr\infty} \sum_{i=0}^n
\lambda^{k,(i)}_t.
\end{eqnarray*} 

We call $N^n = (N^{1,n},N^{2,n})$ as the BDCP with truncated finite system
 that is with intensity $\lambda^n = (\lambda^{1,n},\lambda^{2,n})$.\\

\comment{
\begin{eqnarray*}
\lim_{n\rr\infty}\lambda^{j,n}_t(\omega) = \lambda^j_t(\omega) \quad \text{for
all $t\geq 0$.}
\end{eqnarray*}
}

\textbf{From the bivariate system to the univariate system}
In the next section, we will apply the Markov theory to analyse the joint
distribution of  generations of type 1 and 2. To simplify the multi-type
problem, we merge the bivariate branching
system into a univariate system, such that the $i$-th generation of type
1 and type 2 offspring become the $(2i-1)$-th and $2i$-th generation in the
new univariate system. Denote the birth time of the $n$-th generation in
the new system as $\{T^{(n)}_j\}_j$ and the counting process as $N^{(n)}_t$
with intensity $\Lambda^{(n)}_t$, then for $i=1,2,\ldots,$ 
\begin{equation*}
\Lambda^{(2i-1)}_t = \lambda^{1,(i)}_t,\quad \Lambda^{(2i)}_t = \lambda^{2,(i)}_t.
\end{equation*}

Hence, 
\begin{eqnarray*}
\Lambda^{(1)}_t &=& \lambda^1_0  e^{-\delta_1 t} + \sum_{i=1}^{M^{1}_t}Y^1_i
e^{-\delta_1 \left(t-T^{1,(0)}_i\right)},\\
\Lambda^{(2)}_t &=& \lambda^2_0  e^{-\delta_2 t} + \sum_{i=1}^{M^{2}_t}Y^2_i
e^{-\delta_2 \left(t-T^{2,(0)}_i\right)},\\
\Lambda^{(2i+1)}_t &=& \sum_{j=1}^{N^{(2i-1)}_t}Z^{1,1}_j e^{-\delta_1 \left(t-T^{(2i-1)}_j\right)}
+ \sum_{j=1}^{N^{(2i)}_t}Z^{1,2}_j e^{-\delta_1 \left(t-T^{(2i)}_j\right)},\\
\Lambda^{(2i+2)}_t &=& \sum_{j=1}^{N^{(2i-1)}_t}Z^{2,1}_j e^{-\delta_2 \left(t-T^{(2i-1)}_j\right)}
+ \sum_{j=1}^{N^{(2i)}_t}Z^{2,2}_j e^{-\delta_2 \left(t-T^{(2i)}_j\right)}.
\end{eqnarray*}

Moreover, by construction, the original branching system is recovered by:
\begin{eqnarray*}
\lambda^{1,n}_t &=& \sum_{i=1}^{n} \Lambda^{(2i-1)}_t, \quad 
N^{1,n}_t = \sum_{i=1}^{n} N^{(2i-1)}_t, \\
\lambda^{2,n}_t &=& \sum_{i=1}^{n} \Lambda^{(2i)}_t,\quad 
N^{2,n}_t = \sum_{i=1}^{n} N^{(2i)}_t.
\end{eqnarray*}

Hence the original bivariate system with truncation up to $n$-th generation
is transformed into a univariate system with truncation up to $m$-th generation
with $m=2n$. We call $m$ and $n$ as the system index for the transformed
and original system respectively.

Denote the limiting distribution when $t\rr \infty$ and the stationary distribution:
\begin{itemize}
\item $\left(\Lambda^{(1)},\ldots, \Lambda^{(m)} \right)$: $\pi^m_A$ and
$\pi^m_S$
\item $\left(\lambda^{1,n}, \lambda^{2,n} \right)$: $\mu^n_A$ and $\mu^n_S$
\item $\left(\lambda^1,\lambda^2\right)$: $\mu^*_A$ and $\mu^*_S$
\end{itemize}

\section{Markov Property and Limiting Distributions } \label{sec: FiniteSystem}
We use the Markov property and the PDMP theory to explore the limiting distributions
of the intensities in this section. In the next section, we build the relationship
between the limiting distribution and the stationary distribution.

\subsection{Markov Property}\label{sec: MarkovProperty_FiniteSystem}

Though the intensity $(\lambda^1_t,\lambda^2_t)$ is a Markov process, it
is difficult to explore the stationarity using the PDMP theory as they are
coupled due to the cross-exciting components. Moreover, it is not obvious
how we can find the condition on the intensity process such that the existence
and stationarity can be analyzed. 
Hence, the finite system with the branching
structure introduced above will be used for the stationarity analysis.

The finite system $(\lambda^{1,n}_t,\lambda^{2,n}_t)$ and is not Markovian,
but the joint system $\left(\Lambda^{(1)}_t,\Lambda^{(2)}_t,\ldots,\Lambda^{(m)}_t\right)$
is. The generator of $\left(t,\Lambda^{(1)}_t,\Lambda^{(2)}_t,\ldots,\Lambda^{(m)}_t\right)$
is $ \A_m$ with  domain $\mathcal{D}\left( \A_m
\right)$. For any $f\in \mathcal{D}\left( \A_m
\right)$,
\begin{eqnarray*}
&&\A_m f(t, \lambda_1,\lambda_2,\ldots,\lambda_m) \nonumber\\
&=& \frac{\partial f}{\partial t} -\sum_{i=1}^m \delta_i \lambda_i \frac{\partial
f}{\partial \lambda_i}
 + \rho_1 \left[\int_0^\infty f(\lambda + e_1 y)dH_1(y) -f(\lambda) \right]
+ \rho_2 \left[\int_0^\infty f(\lambda + e_2 y)dH_2(y) -f(\lambda) \right]
\nonumber\\
&& + \sum_{k=1}^{n-1} \lambda_{2k-1}\left[ \left(\int_0^\infty f(\lambda+
e_{2k+1}z)dG_{1,1}(z) -f(\lambda) \right) + \left(\int_0^\infty f(\lambda
+ e_{2k+2}z)dG_{2,1}(z) -f(\lambda) \right) \right]\nonumber\\
&& + \sum_{k=1}^{n-1} \lambda_{2k}\left[ \left(\int_0^\infty f(\lambda+ e_{2k+1}
z)dG_{1,2}(z) -f(\lambda) \right) + \left(\int_0^\infty f(\lambda + e_{2k+2}z)dG_{2,2}(z)
-f(\lambda) \right) \right],
\end{eqnarray*}
where $\mathbf{\lambda}:= (\lambda_1,\ldots,\lambda_m)$ and $e_i := (0,\ldots,0,1,0,\ldots)
\in \R^m$ where only the $i$-th element is $1$ and others are all $0$.

Take $f\left(t,\Lambda^{(1)}_t,\cdots,\Lambda^{(m)}_t\right) = e^{-B_1(t)\Lambda^{(1)}_t-\cdots-B_m(t)\Lambda^{(m)}_t
+ c_m(t)}$ and suppose it is a martingale. Consider for any $T>0$ and assume
that $B_i(T) = v_i$ and $c_m(0) = 0$, then the Laplace transform
of $(\Lambda^{(1)}_T,\ldots, \Lambda^{(m)}_T)$ conditional on the initial
condition $\Lambda_0
= (\Lambda^{(1)}_0,\ldots, \Lambda^{(m)}_0)$ at $(v_1,\ldots,v_m)\in \R^m_+$
is
\begin{eqnarray*}
 \E_0\left[ e^{-v_1\Lambda^{(1)}_T -
\cdots - v_m\Lambda^{(m)}_T } \right] = \E_0\left[ e^{-B_1(T)\Lambda^{(1)}_T-\cdots
- B_m(T)\Lambda^{(m)}_T } \right] =  e^{-B_1(0)\Lambda^{(1)}_0-\cdots-B_m(0)\Lambda^{(m)}_0
- c_m(T)}. 
\end{eqnarray*}

A sufficient condition for $f$ to be a martingale is $\A_m f(t, \lambda_1,\lambda_2,\ldots,\lambda_m)
= 0$ for any $t$, $\{\lambda_i\}_{i=1}^{m}$, $\rho_1$, $\rho_2$ on $\R_+$,
i.e., 
\begin{eqnarray*}
0=\frac{\A_m f}{f} &=& \sum_{k=1}^{n-1} \lambda_{2k-1}\left[-\dot{B}_{2k-1}(t)+\delta_1
B_{2k-1}(t)+ \left(\hat{g}_{1,1}(B_{2k+1}(t))-1\right)+\left(\hat{g}_{2,1}(B_{2k+2}(t))-1\right)\right]
\\
&&+\sum_{k=1}^{n-1} \lambda_{2k}\left[-\dot{B}_{2k}(t)+\delta_2 B_{2k}(t)+
\left(\hat{g}_{1,2}(B_{2k+1}(t))-1\right)+\left(\hat{g}_{2,2}(B_{2k+2}(t))-1\right)\right]\\
&&+\lambda_{2n-1}\left[-\dot{B}_{2n-1}(t)+\delta_1 B_{2n-1}(t) \right]
+ \lambda_{2n}\left[-\dot{B}_{2n}(t)+\delta_2 B_{2n}(t) \right]\\
&& + \dot{c}_m(t)+\rho_1 \left( \hat{h}_1(B_1(t))-1 \right)+\rho_2 \left(
\hat{h}_2(B_2(t))-1 \right).
\end{eqnarray*}

Therefore, the sequence of functions $\left(B_i(t)\right)_{i=1}^{m}$ solves
the backward recursive ODE system ($k=1,\ldots, n-1$)
\begin{eqnarray*}
-\dot{B}_{2n}(t)+\delta_1 B_{2n}(t) &=& 0,\quad B_{2n}(T) = v_{2n},\\
-\dot{B}_{2n-1}(t)+\delta_2 B_{2n-1}(t) &=& 0,\quad B_{2n-1}(T) = v_{2n-1},\\
-\dot{B}_{2k-1}(t)+\delta_1 B_{2k-1}(t)+ \left(\hat{g}_{1,1}(B_{2k+1}(t))-1\right)+\left(\hat{g}_{2,1}(B_{2k+2}(t))-1\right)
&=&0,\quad B_{2k-1}(T) = v_{2k-1},\\
-\dot{B}_{2k}(t)+\delta_2 B_{2k}(t)+ \left(\hat{g}_{1,2}(B_{2k+1}(t))-1\right)+\left(\hat{g}_{2,2}(B_{2k+2}(t))-1\right)
&=&0,\quad B_{2k}(T) = v_{2k}.
\end{eqnarray*}
Moreover,
\begin{eqnarray*}
 \dot{c}_m(t)+\rho_1 \left( \hat{h}_1(B_1(t))-1 \right)+\rho_2 \left( \hat{h}_2(B_2(t))-1
\right) &=& 0,\quad c_m(0) = 0.
\end{eqnarray*}

We transform the system that is backward in the system index and the time
into a forward system by taking
\begin{eqnarray*}
l_k(t) := B_{m+1-k}(T-t)= B_{2n+1-k}(T-t).
\end{eqnarray*}

By construction
 $\Lambda^{(1)}_0 = \lambda^1_0,
\Lambda^{(2)}_0 = \lambda^2_0$ and $\Lambda^{(j)}_0 \equiv 0$ for $j>2$,
then the Laplace transform becomes
\begin{eqnarray}\label{eqn: Laplace_L_m_T_forward}
\E_0\left[ e^{-v_1\Lambda^{(1)}_T -
\cdots - v_m\Lambda^{(m)}_T } \right] =  e^{-l_m(T)\Lambda^{(1)}_0-\cdots-l_1(T)\Lambda^{(m)}_0
- c_m(T)} =e^{-l_{2n}(T)\lambda^{1}_0-l_{2n-1}(T)\lambda^{2}_0
- c_m(T)}  ,
\end{eqnarray}
where $l_i(t)$ and $c_m(t)$ solves the forward ODE system: with $k=1,\ldots,
n-1$,
\begin{equation}\label{eqn: ForwardODE}
\begin{aligned}
\dot{l}_1(t) + \delta_2 l_1(t) =~& 0,\quad l_1(0) = v_{2n},\\
\dot{l}_2(t) + \delta_1 l_2(t) =~& 0,\quad l_2(0) = v_{2n-1},\\
\dot{l}_{2k+1}(t) + \delta_2 l_{2k+1}(t) - \left(1 - \hat{g}_{1,2}(l_{2k}(t))\right)-\left(1-\hat{g}_{2,2}(l_{2k-1}(t))\right)
=~&0,\quad l_{2k+1}(0) = v_{2(n-k)},\\
\dot{l}_{2k+2}(t) + \delta_1 l_{2k+2}(t) - \left(1 - \hat{g}_{1,1}(l_{2k}(t))\right)
- \left(1 - \hat{g}_{2,1}(l_{2k-1}(t))\right) =~&0,\quad l_{2k+2}(0) = v_{2(n-k)-1},\\
\dot{c}_m(t) - \rho_1 \left(1 - \hat{h}_1(l_{2n}(t))\right) - \rho_2 \left(1
- \hat{h}_2(l_{2n-1}(t))\right)=~&0,\quad c_m(0) = 0.
\end{aligned}
\end{equation}

Note that the ODE system (\ref{eqn: ForwardODE}) has a unique and explicit
solution in a recursive form
\begin{eqnarray}\label{eqn: l_function}
 l_1(t) &=& v_{2n}e^{-\delta_2 t},\nonumber\\
 l_2(t) &=& v_{2n-1}e^{-\delta_1 t},\nonumber\\ 
l_{2k+1}(t) &=& v_{2(n-k)}e^{-\delta_2 t} + e^{-\delta_2 t}\int_0^t e^{\delta_2
s}\left[ 1-\hat{g}_{1,2}(l_{2k}(s))+1-\hat{g}_{2,2}(l_{2k-1}(s))\right] ds
,\nonumber\\
l_{2k+2}(t) &=& v_{2(n-k)-1}e^{-\delta_1 t} + e^{-\delta_1 t}\int_0^t e^{\delta_1
s}\left[ 1-\hat{g}_{1,1}(l_{2k}(s))+1-\hat{g}_{2,1}(l_{2k-1}(s))\right] ds.\nonumber\\
\end{eqnarray}

Moreover,
\begin{eqnarray*}
c_m(T) &=& \rho_1 \int_0^T \left[1-\hat{h}_1(l_{2n}(t))\right]dt + \rho_2
\int_0^T \left[1-\hat{h}_2(l_{2n-1}(t))\right]dt.
\end{eqnarray*}

\subsection{Limiting Distributions}\label{sec: LimitingDistribution}

We first introduce the following crucial condition.
\begin{condition}\label{asm:
stationaryCondition}\quad\\ 
\begin{enumerate}
\item[(C2)] The spectral radius of the matrix 
$\begin{bmatrix}
\frac{\mu_{G^{2,2}}}{\delta_2} & \frac{\mu_{G^{1,2}}}{\delta_2}\\ 
\frac{\mu_{G^{2,1}}}{\delta_1} &
\frac{\mu_{G^{1,1}}}{\delta_1}
\end{bmatrix}$ is less than $1$.
\end{enumerate}
\end{condition}

\begin{remark}
The spectral radius condition in the condition above is equivalent to the
following
\begin{eqnarray}\label{eqn: statCondition}
\frac{1}{2}\left(\frac{\mu_{G^{1,1}}}{\delta_1}+\frac{\mu_{G^{2,2}}}{\delta_2}
+ \sqrt{\left(\frac{\mu_{G^{1,1}}}{\delta_1} + \frac{\mu_{G^{2,2}}}{\delta_2}\right)^2
+ 4 \frac{\mu_{G^{1,2}}}{\delta_2}\frac{\mu_{G^{2,1}}}{\delta_1}} \right)
< 1.
\end{eqnarray} 
\end{remark}

The following lemma shows a necessary condition of the existence of the limiting
distribution. It also indicates
that the limiting distribution is independent from the initial condition
with (\ref{eqn: Laplace_L_m_T_forward}).
\begin{lemma}\label{lem: lfunction_asympt_finiteSystem}
For any $i=1,\ldots,m$, and $\left(v_1,\ldots, v_m \right)\in \mathbb{R}^m_+$,
\begin{eqnarray*}
\lim_{t\rr \infty}l_i(t) = 0.
\end{eqnarray*}
\end{lemma}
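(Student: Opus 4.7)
The plan is to proceed by induction on the index $i$, exploiting the recursive structure of the explicit solution~(\ref{eqn: l_function}). For each fixed (finite) $m = 2n$ the statement should hold without invoking the spectral-radius Condition~\ref{asm: stationaryCondition}; condition (C2) is needed only later when one lets the generation index $n \to \infty$.

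First I would dispose of the base cases $i = 1$ and $i = 2$: from~(\ref{eqn: l_function}) these are simply $v_{2n} e^{-\delta_2 t}$ and $v_{2n-1} e^{-\delta_1 t}$, which tend to $0$ since $\delta_1, \delta_2 > 0$. The inductive step is the main content. Suppose that $l_{2k-1}(t) \to 0$ and $l_{2k}(t) \to 0$ as $t \to \infty$. Because each $\hat{g}_{i,j}$ is the Laplace transform of a probability distribution on $\R_+$, it is continuous at $0$ with $\hat{g}_{i,j}(0) = 1$. Hence
\begin{equation*}
\phi_{2k+1}(s) := \bigl(1 - \hat{g}_{1,2}(l_{2k}(s))\bigr) + \bigl(1 - \hat{g}_{2,2}(l_{2k-1}(s))\bigr) \longrightarrow 0 \quad \text{as } s \to \infty,
\end{equation*}
and similarly for the integrand $\phi_{2k+2}(s)$ appearing in the recursion for $l_{2k+2}$.

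The remaining step is then the following elementary fact: for $\delta > 0$ and any bounded continuous $\phi : \R_+ \to \R$ with $\phi(s) \to 0$, one has $e^{-\delta t}\int_0^t e^{\delta s}\phi(s)\,ds \to 0$. This is proved by splitting the integral at a threshold $T$ chosen so that $|\phi(s)| < \varepsilon$ for $s > T$: the contribution from $[0,T]$ is bounded by a constant times $e^{-\delta t}$ and thus vanishes, while the contribution from $[T,t]$ is bounded by $\varepsilon/\delta$ uniformly in $t$. Applying this to $\phi_{2k+1}$ and $\phi_{2k+2}$ (which are bounded by $2$ because $0 \leq \hat{g}_{i,j} \leq 1$ on $\R_+$, using that $l_{2k-1}, l_{2k}$ remain non-negative by an immediate induction), and noting that the $v_{2(n-k)}e^{-\delta_2 t}$ and $v_{2(n-k)-1}e^{-\delta_1 t}$ initial terms also vanish, we conclude $l_{2k+1}(t), l_{2k+2}(t) \to 0$.

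No single step is a real obstacle; the exercise is mainly bookkeeping the recursion. The only point that requires a moment of care is verifying the auxiliary asymptotic for $e^{-\delta t}\int_0^t e^{\delta s}\phi(s)\,ds$; one could alternatively derive it from L'Hôpital's rule applied to $\int_0^t e^{\delta s}\phi(s)\,ds \,\big/\, e^{\delta t}$, but the $\varepsilon$-splitting argument above is cleaner and does not require $\phi$ to be smooth.
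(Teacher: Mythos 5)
Your proof is correct and follows essentially the same route as the paper: induction on the index along the recursive explicit solution (\ref{eqn: l_function}), with the base cases $l_1,l_2$ being pure exponentials and the inductive step reduced to showing that $e^{-\delta t}\int_0^t e^{\delta s}\phi(s)\,ds\to 0$ when $\phi\to 0$. The paper establishes that last limit by L'H\^opital's rule, whereas you use the $\varepsilon$-splitting of the integral; this is only a cosmetic difference (your version avoids invoking differentiability of the integrand, which is available here anyway).
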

\begin{proof}
See Section~\ref{sec: proof_lem_lfunction_asympt_finiteSystem}.
\end{proof}

Then the Laplace transform (\ref{eqn: Laplace_L_m_T_forward}) of the limiting
distribution of the univariate
finite system $(\Lambda^{(1)},\ldots,\Lambda^{(m)})$ at any $(v_1,\ldots,v_m)
\in \R^m_{+}$ becomes
\begin{eqnarray}\label{eqn: Lap_truncated_asymptotic}
\hat{\pi}^m_A(v_1,\ldots, v_m) &:=& \lim_{T\rr\infty}\E_0\left[ e^{-v_1\Lambda^{(1)}_T
-
\cdots - v_m\Lambda^{(m)}_T } \right]  \nonumber\\
&=&\exp\left(-\rho_1 \int_0^\infty \left[1-\hat{h}_1(l_{2n}(t))\right]dt
-\rho_2 \int_0^\infty \left[1-\hat{h}_2(l_{2n-1}(t))\right]dt \right).\nonumber\\
\end{eqnarray}

In the following theorem, we provide the existence condition for the limiting
distributions.
\begin{theorem}[Existence of Limiting Distributions]\label{thm: LimitingDistributionAll}
\quad \\
(1) Under Condition $(C1)$, as $t \rr \infty$, the limiting distributions
$\pi^m_A$ of $\left(\Lambda^{(1)},\ldots, \Lambda^{(m)} \right)$ and $\mu^n_A$
of  $\left(\lambda^{1,n}, \lambda^{2,n} \right)$ exist.\\
(2) Under Condition $(C1)$ and $(C2)$, as $t\rr \infty$, the limiting distribution
$\mu^*_A$ of $\left(\lambda^1,\lambda^2\right)$ exits.
\end{theorem}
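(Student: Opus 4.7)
The plan is to read off both limiting distributions from the explicit Laplace transform (\ref{eqn: Laplace_L_m_T_forward}) and apply L\'evy's continuity theorem for Laplace transforms of $\R_+^m$-valued random vectors. Condition $(C2)$ plays no role in part (1); it enters part (2) only to control the $n\rr\infty$ limit of the branching approximation.

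For part (1), Lemma~\ref{lem: lfunction_asympt_finiteSystem} kills the $\lambda^1_0,\lambda^2_0$ terms in the exponent of (\ref{eqn: Laplace_L_m_T_forward}), so all that remains is to show that $c_m(T)$ has a finite limit as $T\rr\infty$ which is continuous at $(v_1,\ldots,v_m)=0$. I would prove by induction on $i$, using the explicit recursion (\ref{eqn: l_function}) together with the elementary Laplace bound $1-\hat{g}_{k,k'}(v)\leq \mu_{G^{k,k'}} v$ for $v\geq 0$, that for any fixed $\eta\in(0,\min(\delta_1,\delta_2))$ each $l_i(t)$ satisfies $l_i(t)\leq K_i(v_1,\ldots,v_m)e^{-\eta t}$, with $K_i$ continuous in $(v_1,\ldots,v_m)$ and vanishing at the origin; the recursion propagates both the exponential decay and the continuity from one generation to the next, and since the system is finite only $m$ iterations are needed. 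Combined with $1-\hat{h}_k(v)\leq\mu_{H^k}v$ from $(C1)$, this gives a finite limit $c_m(\infty)$ continuous in $(v_1,\ldots,v_m)$ at the origin, so (\ref{eqn: Lap_truncated_asymptotic}) is a bona fide Laplace transform and $\pi^m_A$ exists via L\'evy continuity. The existence of $\mu^n_A$ follows immediately because $(\lambda^{1,n},\lambda^{2,n})$ is the continuous linear image of $(\Lambda^{(1)},\ldots,\Lambda^{(m)})$ that sums odd and even coordinates.

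For part (2), I exploit that $\lambda^{k,n}_T\uparrow\lambda^k_T$ a.s. as $n\rr\infty$. For each fixed $T$, $\E_0[e^{-v_1\lambda^{1,n}_T-v_2\lambda^{2,n}_T}]$ is monotone decreasing in $n$, and passing $T\rr\infty$ using part~(1) preserves monotonicity, so $\hat{\mu}^*_A(v_1,v_2):=\lim_n\hat{\mu}^n_A(v_1,v_2)$ exists pointwise. To invoke L\'evy continuity I need $\hat{\mu}^*_A$ continuous at $0$, equivalently tightness of $\{\mu^n_A\}$, and this is where $(C2)$ is crucial: applying Campbell's theorem generation by generation in the branching representation yields $\E[\lambda^{k,(i)}_t]\leq (M^i\mathbf{b})_k$, where $M$ is (a permutation of) the matrix in $(C2)$ and $\mathbf{b}$ comes from the shot-noise zeroth generation, both bounded in $t$ under $(C1)$. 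Subcriticality $\rho(M)<1$ then yields $\sup_n\sup_t\E[\lambda^{k,n}_t]<\infty$, which bounds $\int x_k\,d\mu^n_A$ uniformly in $n$ and provides tightness. To identify $\hat{\mu}^*_A$ with $\lim_T\E_0[e^{-v_1\lambda^1_T-v_2\lambda^2_T}]$, I use the elementary bound $|e^{-u}-e^{-w}|\leq|u-w|$ for $u,w\geq 0$ to obtain
\begin{eqnarray*}
\sup_T\left|\E_0[e^{-v_1\lambda^1_T-v_2\lambda^2_T}]-\E_0[e^{-v_1\lambda^{1,n}_T-v_2\lambda^{2,n}_T}]\right| \leq \sum_{k=1}^2 v_k \sup_T\sum_{i\geq n+1}\E[\lambda^{k,(i)}_T],
\end{eqnarray*}
and the right-hand side goes to $0$ as $n\rr\infty$ by the geometric tail of $M^i$, giving the uniform interchange of the $n$ and $T$ limits needed to conclude existence of $\mu^*_A$. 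The main obstacle I anticipate is the tightness step: the inequality $\E[\lambda^{k,(i)}_t]\leq(M^i\mathbf{b})_k$ must be established rigorously and uniformly in $t$, which requires careful Fubini/Campbell arguments at each generation so as not to create circular reasoning against the very finiteness we are trying to prove.
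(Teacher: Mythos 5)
Your proposal is correct, and while part (1) stays close to the paper's strategy (read the limit off the explicit Laplace transform (\ref{eqn: Laplace_L_m_T_forward}) and bound $1-\hat h_k$ by $\mu_{H^k}\, l$), your quantitative lemma is different: you prove pointwise exponential decay $l_i(t)\leq K_i e^{-\eta t}$ by a finite induction, whereas the paper dominates the $l_i$ by functions $L_i$ with maximal initial data, introduces the generation-increment (``distance'') functions $d^{(1)}_k, d^{(2)}_k$, and derives the matrix recursion $\int_0^\infty d_{i+1}\,dt\leq A\int_0^\infty d_i\,dt$ for the time-integrals, summing the finitely many terms; for the finite system both work and neither needs $(C2)$. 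In part (2) the routes genuinely diverge. The paper stays on the analytic side: $l_{2n-1},l_{2n}$ are increasing in $n$ and converge monotonically to $(l_1^*,l_2^*)$, and $(C2)$ enters only to make the geometric series $\sum_i\|A^{i-1}\|$ converge so that $\int_0^\infty l_j^*(t)\,dt<\infty$, whence the limit Laplace transform is non-degenerate. You instead argue probabilistically: a Campbell/Fubini computation per generation gives $\sup_t\E[\lambda^{k,(i)}_t]\leq(M^i\mathbf{b})_k$ with $M$ diagonally conjugate to (hence with the same spectral radius as) the matrix $A$ of $(C2)$, so subcriticality gives a summable tail; this yields both tightness of $\{\mu^n_A\}$ and the uniform-in-$T$ truncation bound $\sup_T\left|\E_0[e^{-v_1\lambda^1_T-v_2\lambda^2_T}]-\E_0[e^{-v_1\lambda^{1,n}_T-v_2\lambda^{2,n}_T}]\right|\rr 0$. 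The latter buys you something the paper leaves implicit: a rigorous justification of the interchange $\lim_T\lim_n=\lim_n\lim_T$, i.e., that the object you compute really is the limiting distribution of $(\lambda^1_T,\lambda^2_T)$ and not merely of the approximants. The price is the per-generation moment induction you flag, but there is no circularity there: each generation's finiteness follows from the previous one's under $(C1)$, and only finitely many steps are needed before the geometric tail takes over.
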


\begin{proof}
(1)
Since for any $k\in \mathbb{N}$,
\begin{eqnarray}\label{eqn: OneMinusHatH}
1-\hat{h}_1\left(l_{2k}(t) \right)  = \int_t^\infty d \hat{h} _1\left(l_{2k}(u)
\right) = \int_t^\infty \hat{h}'_1(l_{2k}(u)) \dot{l}_{2k}(u)du \leq \mu_{H^1}
l_{2k}(t).
\end{eqnarray}

Hence $\int_0^\infty \left[1-\hat{h}_1\left(l_{2k}(t) \right) \right]dt \leq
\mu_{H^1}\int_0^\infty l_{2k}(t) dt$ and (\ref{eqn: Lap_truncated_asymptotic})
becomes 
\begin{eqnarray}\label{eqn: Laplace_Limiting_Bound}
\hat{\pi}^m_A (v_1,\ldots,v_m) \geq \exp\left(-\rho_1 \mu_{H^1} \int_0^\infty
l_{2n}(t) dt -\rho_2 \mu_{H^2} \int_0^\infty l_{2n-1}(t)
dt\right).
\end{eqnarray}

In order to show the existence of the limiting distribution $\pi^m_A$, it
is sufficient to show that 
$\begin{bmatrix}
\int_0^\infty l_{2n-1}(t)dt\\ 
\int_0^\infty l_{2n}(t)dt
\end{bmatrix}
$ is finite and thus the process does not explode as $t\rr\infty$.

From (\ref{eqn: l_function}), for all $j = 1,\ldots,2n$, and $t\geq 0$, the
function $l_j(t)$ is increasing with the initial value $v_{2n+1-j}$. We construct
a sequence of functions $\{L_j(t)\}_{j=1}^{2n}$ that is the solution to the
forward ODE system (\ref{eqn: ForwardODE}) with the initial value
\begin{equation}\label{eqn: InitialValue_L_dominant}
L_{2k-1}(0) = v_2^* = \max_{i=1,\ldots,n} v_{2i},\quad L_{2k}(0) = v^*_1=
\max_{i=1,\ldots,n} v_{2i-1} \quad \text{for $k=1,\ldots,n$}.
\end{equation}
Therefore $l_{j}(t)\leq L_{j}(t)$ for $i=1,\ldots,2n$,  and $\begin{bmatrix}
\int_0^\infty l_{2n-1}(t)dt\\ 
\int_0^\infty l_{2n}(t)dt
\end{bmatrix}
\leq 
\begin{bmatrix}
\int_0^\infty L_{2n-1}(t)dt\\ 
\int_0^\infty L_{2n}(t)dt
\end{bmatrix}
$. Then it is sufficient to show that $\begin{bmatrix}
\int_0^\infty L_{2n-1}(t)dt\\ 
\int_0^\infty L_{2n}(t)dt
\end{bmatrix}
<\infty$.

With the same initial value as in (\ref{eqn: InitialValue_L_dominant}), from
the explicit recursive solution (\ref{eqn: ForwardODE}), one can easily check
by induction that for each $t\geq 0$, $L_{2k-1}(t)$ and $L_{2k}(t)$ are increasing
with $k$. Hence we can define
non-negative distance functions 
\begin{eqnarray*}
k =1: \quad &&  d^{(1)}_1(t) := L_{1}(t), \quad d^{(2)}_1(t) := L_{2}(t).\\
k \geq 2: \quad && d^{(1)}_k(t) := L_{2k-1}(t) - L_{2k-3}(t), \quad d^{(2)}_k(t)
:= L_{2k}(t) - L_{2k-2}(t).
\end{eqnarray*}
The following inequalities hold and are proven in Section~\ref{sec: proof_distanceFunction}:
\begin{equation}\label{eqn: distance_function}
\begin{aligned}
d^{(1)}_{k+1}(t) &\leq  e^{-\delta_2 t} \int_0^t e^{\delta_2 s} \left[\mu_{G^{2,2}}
d^{(1)}_k(s) + \mu_{G^{1,2}}d^{(2)}_k(s) \right] ds, \\
d^{(2)}_{k+1}(t) &\leq  e^{-\delta_1 t} \int_0^t e^{\delta_1 s} \left[ \mu_{G^{2,1}}
d^{(1)}_k(s)+ \mu_{G^{1,1}}d^{(2)}_k(s) \right] ds. 
\end{aligned}
\end{equation}

\begin{eqnarray*}
\int_0^\infty d^{(1)}_{i+1}(t) dt &\leq & \int_{t=0}^\infty e^{-\delta_2
t} \int_{s=0}^t e^{\delta_2 s} \left( \mu_{G^{2,2}} d_i^{(1)}(s) + \mu_{G^{1,2}}
d_i^{(2)}(s) \right)ds \\
&=&  \int_{s=0}^\infty \left(\int_{t=s}^\infty e^{-\delta_2 t}dt \right)
 e^{\delta_2 s} \left( \mu_{G^{2,2}} d_i^{(1)}(s) + \mu_{G^{1,2}} d_i^{(2)}(s)
\right)ds\\
&=& \frac{\mu_{G^{2,2}}}{\delta_2}\int_0^\infty d^{(1)}_i(s)ds + \frac{\mu_{G^{1,2}}}{\delta_2}\int_0^\infty
d^{(2)}_i(s)ds. 
\end{eqnarray*}
Similarly, 
\begin{eqnarray*}
\int_0^\infty d^{(2)}_{i+1}(t) dt &\leq &\frac{\mu_{G^{2,1}}}{\delta_1}\int_0^\infty
d^{(1)}_i(s)ds + \frac{\mu_{G^{1,1}}}{\delta_1}\int_0^\infty d^{(2)}_i(s)ds.
\end{eqnarray*}

i.e.
\begin{eqnarray*}
\begin{bmatrix}
\int_0^\infty d^{(1)}_{i+1}(t)dt\\ 
\int_0^\infty d^{(2)}_{i+1}(t)dt
\end{bmatrix} 
\leq A 
\begin{bmatrix}
\int_0^\infty d^{(1)}_{i}(t)dt\\ 
\int_0^\infty d^{(2)}_{i}(t)dt
\end{bmatrix},\quad \text{with } A := \begin{bmatrix}
\frac{\mu_{G^{2,2}}}{\delta_2} & \frac{\mu_{G^{1,2}}}{\delta_2}\\ 
\frac{\mu_{G^{2,1}}}{\delta_1} &
\frac{\mu_{G^{1,1}}}{\delta_1}
\end{bmatrix}.
\end{eqnarray*}

Iteratively, we obtain for $i\geq 1$,
\begin{eqnarray*}
\begin{bmatrix}
\int_0^\infty d^{(1)}_{i}(t)dt\\ 
\int_0^\infty d^{(2)}_{i}(t)dt
\end{bmatrix} 
\leq A^{i-1} 
\begin{bmatrix}
\int_0^\infty d^{(1)}_{1}(t)dt\\ 
\int_0^\infty d^{(2)}_{1}(t)dt
\end{bmatrix} 
= A^{i-1} 
\begin{bmatrix}
\frac{v_2^{*}}{\delta_2}\\ 
\frac{v_1^{*}}{\delta_1}
\end{bmatrix}. 
\end{eqnarray*}

Denote the spectral radius of $A$ as $\rho$. From the matrix theory,  for
any $\epsilon>0$, and denote $\td{\rho}:=\rho+\epsilon$, there exists a
norm $\|\cdot\|$, such that $ \|A\|\leq \td{\rho}$. Then, for any $i\geq
1$, 
$$\|A^{i}\|\leq \|A\|^{i} \leq   \td{\rho}^i.$$
Moreover, take the Euclidean norm and due to the equivalence of norm,
there exists a constant $C>0$, such that  
\begin{eqnarray*}
\|A^{i}\|_2 \leq C \|A^{i}\| \leq C \td{\rho}^{i}.
\end{eqnarray*}

By definition, $L_{2n-1}(t)=  \sum_{i=1}^n d^{(1)}_i(t)$ and $
L_{2n}(t)  = \sum_{i=1}^n
d^{(2)}_i(t)$,
then
\begin{eqnarray*}
\begin{bmatrix}
\int_0^\infty L_{2n-1}(t)dt\\ 
\int_0^\infty L_{2n}(t)dt
\end{bmatrix} = 
\sum_{i=1}^n
\begin{bmatrix}
\int_0^\infty d^{(1)}_i(t)dt\\ 
\int_0^\infty d^{(2)}_i(t)dt
\end{bmatrix}
\leq \left( \sum_{i=1}^n A^{i-1}\right) \begin{bmatrix}
\frac{v_2^*}{\delta_2}\\ 
\frac{v_1^*}{\delta_1}
\end{bmatrix}.
\end{eqnarray*}
Denote $
\td{L}_n:=\left\| 
\begin{bmatrix}
\int_0^\infty L_{2n-1}(t)dt\\ 
\int_0^\infty L_{2n}(t)dt
\end{bmatrix} 
\right\|_2 $, then
\begin{eqnarray*}
\td{L}_n \leq \sum_{i=1}^n \| A^{i-1} \|_2 \left\|   \begin{bmatrix}
\frac{v_2^{*}}{\delta_2}\\ 
\frac{v_1^*}{\delta_1}
\end{bmatrix}
\right\|_2
\leq  C \frac{1-\td{\rho}^n}{1-\td{\rho}} \sqrt{\left(\frac{v_2^*}{\delta_2}\right)^2
+ \left(\frac{v_1^*}{\delta_1}\right)^2} < \infty.
\end{eqnarray*}

Hence $\int_0^\infty L_{2n-1}(t)dt \leq \td{L}_n < \infty$ and $\int_0^\infty
L_{2n}(t)dt \leq\td{L}_n < \infty$, which indicates $\pi^m_A$ exists.\\

The existence of the limiting distribution $\mu^n_A$ of $(\lambda^{1,n},\lambda^{2,n})$
is indicated from the analysis above. Indeed, by taking  $v_{2i-1} = v_1$
and $v_{2i} = v_2$ for $i=1,\ldots,n$, then the Laplace transform in (\ref{eqn:
Lap_truncated_asymptotic}) becomes
\begin{equation}\label{eqn: Lapalace_truncated_v2}
\begin{aligned}
\hat{\mu}^n_A(v_1,v_2) &:= \lim_{T\rightarrow \infty}\E_0\left[ e^{-v_1\lambda^{1,n}_T
 - v_2\lambda^{2,n}_T } \right]\\
&=\exp\left(-\rho_1 \int_0^\infty \left[1-\hat{h}_1(l_{2n}(t))\right]dt -\rho_2
\int_0^\infty \left[1-\hat{h}_2(l_{2n-1}(t))\right]dt \right),
\end{aligned}
\end{equation} 
where $l_{2n-1}(t)$, $l_{2n}(t)$ are from the solution of the ODE system
(\ref{eqn: ForwardODE}) with initial values $l_{2i-1}
(0) = v_1 $ and $l_{2i}(0) \equiv v_2 $ for $i=1,\ldots, n$.

In this case, $l_{j}(t) = L_j(t)$ for $j=1,\ldots,m$, and therefore the limiting
distribution $\mu^n_A$ exists.\\

(2) We explore the existence condition of the limiting distribution $\mu^*_A$
of $(\lambda^1,\lambda^2)$ using the convergence from $\mu^n_A$.

Note that for the Laplace transform $\hat{\mu}^n_A$ in (\ref{eqn: Lapalace_truncated_v2}),
$l_{2n}(t)$ and $l_{2n-1}(t)$ are from the explicit solution to (\ref{eqn:
ForwardODE}). That is, for $k=0,\ldots, n-1$,   
\begin{equation}
\begin{aligned}
l_{1}(t) &= v_2 e^{-\delta_2 t},\quad l_{2}(t) = v_1 e^{-\delta_1 t}\\
l_{2k+1}(t) &= v_{2}e^{-\delta_2 t} + e^{-\delta_2 t}\int_0^t e^{\delta_2
s}\left[ 1-\hat{g}_{1,2}(l_{2k}(s))+1-\hat{g}_{2,2}(l_{2k-1}(s))\right] ds\\
l_{2k+2}(t) &= v_{1}e^{-\delta_1 t} + e^{-\delta_1 t}\int_0^t e^{\delta_1
s}\left[ 1-\hat{g}_{1,1}(l_{2k}(s))+1-\hat{g}_{2,1}(l_{2k-1}(s))\right] ds.
\end{aligned}
\end{equation}

Note that $l_{2k-1}(t)$ and $l_{2k}(t)$ are increasing functions of $k$ for
all $k$ and $t\geq 0$, hence by the monotone convergence theorem, $(l_{2n-1}(t)
,l_{2n}(t))$ converges
to a limit $(l_1^*(t),l_2^*(t))$.

The Laplace transform of the limiting distribution of $(\lambda^1_t,\lambda^2_t)$
is
\begin{eqnarray}\label{eqn: Laplace_Limiting_BDCP_Intensity}
\hat{\mu}_A^*(v_1,v_2) &=& \lim_{n \rr \infty}\hat{\mu}^n_A(v_1,v_2)
= \exp\left(-\rho_1 \int_0^\infty \left[1-\hat{h}_1(l_2^*(t))\right]dt -\rho_2
\int_0^\infty \left[1-\hat{h}_2(l_1^*(t))\right]dt \right).\nonumber\\
\end{eqnarray}
To show $\mu^*_A$ is non-degenerate, following the same argument as in (\ref{eqn:
OneMinusHatH}), it is sufficient to have$
\begin{bmatrix}
\int_0^\infty l_{1}^*(t)dt\\ 
\int_0^\infty l_{2}^*(t)dt
\end{bmatrix} 
<\infty.
$ 

Under $(C2)$ in Condition \ref{asm: stationaryCondition} and the matrix theory,
 take $0<\epsilon<\frac{1-\rho}{2}$,
there exists a
norm $\|\cdot\|$, such that $ \|A\|\leq \td{\rho} = \rho + \epsilon < 1$,
then from the first part of the proof, 
\begin{eqnarray*}
\td{L}_n 
\leq  C \frac{1-\td{\rho}^n}{1-\td{\rho}} \sqrt{\left(\frac{v_2}{\delta_2}\right)^2
+ \left(\frac{v_1}{\delta_1}\right)^2} <  C \frac{1}{1-\td{\rho}} \sqrt{\left(\frac{v_2}{\delta_2}\right)^2
+ \left(\frac{v_1}{\delta_1}\right)^2}.
\end{eqnarray*}
Therefore,
\begin{eqnarray*}
\begin{bmatrix}
\int_0^\infty l_{1}^*(t)dt\\ 
\int_0^\infty l_{2}^*(t)dt
\end{bmatrix} = 
\lim_{n\rr \infty} 
\begin{bmatrix}
\int_0^\infty l_{2n-1}(t)dt\\ 
\int_0^\infty l_{2n}(t)dt
\end{bmatrix} 
<\infty.
\end{eqnarray*}

\end{proof}

\section{Stationary Distribution}\label{sec: StationaryDistribution}
 The limiting distributions of the finite system and BDCP exist by Theorem~\ref{thm:
LimitingDistributionAll}. In this part, we show the equivalence between the
stationary distribution and the limiting distribution.

First, a stationarity condition for the finite system  $\left(\Lambda^{(1)}_t,\ldots,
\Lambda^{(m)}_t\right)$
is provided. {\color{blue}  }
\begin{lemma}[Stationary condition equation for the finite system]
\label{thm: statCondTruncSys}\quad\\
Distribution $\pi^m_S$ is a stationary distribution of $(\Lambda^{(1)}_t,\ldots,
\Lambda^{(m)}_t)$ if and only if the Laplace transform $\hat{\pi}^m_S$ at
any $(v_1,\ldots,v_m)\in \R^m_{+}$ satisfies
\begin{eqnarray*}
0 &=& -\sum_{k=1}^m \delta_k v_k \hat{\pi}^m(v_1,\ldots, v_m) + \rho_1(\hat{h}(v_1)-1)
+\rho_2(\hat{h}(v_2)-1) \nonumber \\
&& + \sum_{k=1}^{n-1} \frac{\partial \hat{\pi}^m(v_1,\ldots,v_m)}{\partial
v_{2k-1}} \left[(1-\hat{g}_{1,1}(v_{2k+1})) + (1-\hat{g}_{2,1}(v_{2k+2}))
\right] 
\\
&& + \sum_{k=1}^{n-1} \frac{\partial \hat{\pi}^m(v_1,\ldots,v_m)}{\partial
v_{2k}} \left[(1-\hat{g}_{1,2}(v_{2k+1})) + (1-\hat{g}_{2,2}(v_{2k+2})) \right].
\end{eqnarray*}

Equivalently, in terms of ODE system (\ref{eqn:
ForwardODE}) as
\begin{eqnarray}\label{eqn: StatEq_v1}
0 = \sum_{k=1}^{n} \dot{l}_{2(n-k)+2}(0) \frac{\partial\hat{\pi}^m_S}{\partial
v_{2k-1}}  +  \dot{l}_{2(n-k)+1}(0) \frac{\partial\hat{\pi}^m_S}{\partial
v_{2k}}  - \rho_1 (1-\hat{h}_1({v_1}))\hat{\pi}^m_S - \rho_2 (1-\hat{h}_2({v_2}))\hat{\pi}^m_S.\nonumber\\
\end{eqnarray}

\end{lemma}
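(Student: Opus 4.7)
The plan is to invoke the general Markov stationarity criterion recalled in equation~\eqref{eqn: Stationarity_Markov}: $\pi^m_S$ is a stationary distribution of $(\Lambda^{(1)},\dots,\Lambda^{(m)})$ if and only if $\int \A_m f\,d\pi^m_S = 0$ for every $f$ in (a core of) $\D(\A_m)$, and then test this identity against the rich parametric family of exponential functions $f_{\mathbf v}(\lambda)=\exp\bigl(-\sum_{i=1}^m v_i\lambda_i\bigr)$, $\mathbf v\in\R^m_+$. Since the PDMP in question is non-explosive by Condition~(C1), and since bounded $C^1$ functions of $\lambda$ belong to the extended generator in the sense of Davis~\cite{Davis1984}, $f_{\mathbf v}$ is a legitimate test function; the family is also measure-determining, so the equivalence ``$\int\A_m f\,d\pi^m_S=0$ for all $f\in\D(\A_m)$'' $\iff$ ``the identity holds for every $f_{\mathbf v}$'' follows from a standard approximation argument (any $C^1_b$ function can be written as a mixture of exponentials via Laplace inversion on $\R_+^m$).

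Next I would compute $\A_m f_{\mathbf v}$ by plugging $f_{\mathbf v}$ into the generator displayed above Lemma~\ref{thm: statCondTruncSys}. Since $\partial f_{\mathbf v}/\partial \lambda_i = -v_i f_{\mathbf v}$ and $f_{\mathbf v}(\lambda+e_j y)=f_{\mathbf v}(\lambda)e^{-v_j y}$, the shot-noise jump terms contribute factors $\rho_k(\hat h_k(v_k)-1)f_{\mathbf v}$, and the self/cross-excitation jump terms contribute factors $\lambda_{2k-1}[(\hat g_{1,1}(v_{2k+1})-1)+(\hat g_{2,1}(v_{2k+2})-1)]f_{\mathbf v}$ (and the analogous type-$2$ term). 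Integrating against $\pi^m_S$ and using the Laplace-transform identity $\int \lambda_i f_{\mathbf v}\,d\pi^m_S = -\partial \hat\pi^m_S/\partial v_i$ converts every occurrence of $\lambda_i$ into a partial derivative in $v_i$, yielding the first stated equation. (The coefficient of $\hat\pi^m_S$ in that display with the bare $v_i$ factor arises from the drift term $-\sum_i\delta_i\lambda_i\partial_{\lambda_i}f_{\mathbf v}$, which after integration becomes $-\sum_i\delta_i v_i\,\partial_{v_i}\hat\pi^m_S$; I would flag this as the intended reading.)

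To deduce the second, ODE-based form~\eqref{eqn: StatEq_v1}, I would simply recognise that the bracketed coefficients multiplying the partial derivatives in the first form are, up to an additive $-\delta_\cdot v_\cdot$, exactly the right-hand sides of the forward ODEs in~\eqref{eqn: ForwardODE} evaluated at $t=0$. More precisely, from $\dot l_{2(n-k)+2}(t)+\delta_1 l_{2(n-k)+2}(t) = (1-\hat g_{1,1}(l_{2(n-k)}(t)))+(1-\hat g_{2,1}(l_{2(n-k)-1}(t)))$ together with the initial conditions $l_{2(n-k)+2}(0)=v_{2k-1}$, $l_{2(n-k)}(0)=v_{2k+1}$, $l_{2(n-k)-1}(0)=v_{2k+2}$, one reads off $\dot l_{2(n-k)+2}(0)=-\delta_1 v_{2k-1}+(1-\hat g_{1,1}(v_{2k+1}))+(1-\hat g_{2,1}(v_{2k+2}))$, which is precisely the sum of the drift contribution and the self/cross-excitation bracket that multiplies $\partial\hat\pi^m_S/\partial v_{2k-1}$; an identical calculation handles the even-index partials. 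Substituting these identifications into the first form collapses it term by term into~\eqref{eqn: StatEq_v1}.

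The main obstacle is the domain/core issue: justifying that exponential test functions exhaust the stationarity condition for the PDMP generator $\A_m$. For the ``only if'' direction this is immediate from the Dynkin formula applied to bounded exponentials (which are in the extended domain), while for the ``if'' direction one needs to invoke the fact that the linear span of $\{f_{\mathbf v}:\mathbf v\in\R^m_+\}$ is measure-determining and that $\A_m$ acts locally enough (the jump measures have finite activity under Condition~(C1)) that pointwise validity of $\int\A_m f_{\mathbf v}\,d\pi^m_S=0$ in $\mathbf v$ forces the full stationarity equation. The remaining algebraic steps are routine once the generator has been applied to $f_{\mathbf v}$.
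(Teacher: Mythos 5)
Your proposal is correct and rests on the same pillar as the paper's proof, namely the Ethier--Kurtz criterion (\ref{eqn: Stationarity_Markov}) that $\pi^m_S$ is stationary iff $\int \A_m f\,d\pi^m_S=0$ on the domain, followed by the translation of the forward ODE system (\ref{eqn: ForwardODE}) at $t=0$ into the bracketed coefficients. The execution differs in one genuine way: the paper assumes the stationary law has a density $\pi$, integrates by parts to move $\A_m$ onto $\pi$ (producing the adjoint, Fokker--Planck-type equation for the density, with the drift becoming $\delta_k\partial_{\lambda_k}(\lambda_k\pi)$ and the jump terms becoming convolutions), and only then takes the Laplace transform of that equation; you instead plug the exponential family $f_{\mathbf v}(\lambda)=e^{-\sum_i v_i\lambda_i}$ directly into the generator and use $\int\lambda_i f_{\mathbf v}\,d\pi^m_S=-\partial_{v_i}\hat{\pi}^m_S$, which reaches the same Laplace-transform identity without ever positing a density and with less bookkeeping. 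What your route additionally requires --- and what you correctly flag as the only nontrivial point --- is the core/measure-determining argument justifying that testing against $\{f_{\mathbf v}\}$ exhausts the stationarity condition in the ``if'' direction; the paper silently elides this (it tacitly treats the adjoint identity for the density as equivalent to stationarity). Your identification $\dot{l}_{2(n-k)+2}(0)=-\delta_1 v_{2k-1}+(1-\hat{g}_{1,1}(v_{2k+1}))+(1-\hat{g}_{2,1}(v_{2k+2}))$ and its even-index analogue is exactly the paper's reindexing, and you are right that the first display of the lemma as printed contains typos (the drift term should carry $\partial\hat{\pi}^m/\partial v_k$ and the $\rho_i(\hat{h}(v_i)-1)$ terms should carry a factor $\hat{\pi}^m$), as is confirmed by the final equation of the paper's own proof and by (\ref{eqn: StatEq_v1}).
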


\begin{proof}
The proof is based on Markov theory as in (\ref{eqn: Stationarity_Markov})
and
details can be found in Section~\ref{sec: ProofThm_statCondTruncSys}.
\end{proof}

The following theorem states the equivalence between the limiting and stationary
distribution.
\begin{theorem}[Stationarity of the finite system]\label{prop: AsyEqStat}
\quad\\
For any system index $m=2n$ for $\left(\Lambda^{(1)}_t,\ldots,
\Lambda^{(m)}_t\right)$, if there exits a limiting distribution $\pi^m_A$,
then there exits a unique stationary distribution $\pi^m_S$ and $\pi^m_S
\stackrel{d}{=} \pi^m_A$. 
\end{theorem}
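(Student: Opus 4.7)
The plan is to verify that $\pi^m_A$ itself \emph{is} the unique stationary distribution, which simultaneously establishes existence of $\pi^m_S$, uniqueness, and the identity $\pi^m_S\stackrel{d}{=}\pi^m_A$. The strategy avoids computing partial derivatives in Lemma~\ref{thm: statCondTruncSys} directly; instead, I would exploit the semigroup property of the forward ODE (\ref{eqn: ForwardODE}) together with the transient Laplace formula (\ref{eqn: Laplace_L_m_T_forward}).

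First, by conditioning on $\Lambda_0$ and integrating (\ref{eqn: Laplace_L_m_T_forward}) against an arbitrary initial law $\mu$, one obtains the semigroup identity
\begin{equation*}
\E_\mu\!\left[e^{-v\cdot\Lambda_T}\right] \;=\; e^{-c_m(T;v)}\,\hat{\mu}\bigl(l_m(T;v),\,l_{m-1}(T;v),\,\ldots,\,l_1(T;v)\bigr),
\end{equation*}
where $l_i(\cdot;v)$ denotes the unique solution of (\ref{eqn: ForwardODE}) with initial data $l_i(0;v)=v_{m+1-i}$. Next I would use the ODE's semigroup property: the argument $(l_m(T;v),\ldots,l_1(T;v))$, viewed as input to $\hat{\pi}^m_A$, specifies initial data $\bar{l}_i(0)=l_i(T;v)$, so by uniqueness of ODE solutions $\bar{l}_i(s)=l_i(T+s;v)$. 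Substituting into the defining formula (\ref{eqn: Lap_truncated_asymptotic}) and changing variables $u=T+s$ gives
\begin{equation*}
\hat{\pi}^m_A\bigl(l_m(T;v),\ldots,l_1(T;v)\bigr) \;=\; \exp\!\Bigl(-\rho_1\!\int_T^\infty\!\bigl[1-\hat h_1(l_{2n}(u;v))\bigr]du - \rho_2\!\int_T^\infty\!\bigl[1-\hat h_2(l_{2n-1}(u;v))\bigr]du\Bigr),
\end{equation*}
which combines with the explicit expression for $c_m(T;v)$ to recover $\hat{\pi}^m_A(v)$ exactly. Thus, taking $\mu=\pi^m_A$ in the semigroup identity yields $\E_{\pi^m_A}[e^{-v\cdot\Lambda_T}]=\hat{\pi}^m_A(v)$ for every $T\ge 0$ and every $v\in\R^m_+$; by injectivity of the Laplace transform, $\pi^m_A$ is stationary, so we may set $\pi^m_S := \pi^m_A$.

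For uniqueness, let $\mu$ be any stationary distribution. Stationarity and the semigroup identity give $\hat\mu(v) = e^{-c_m(T;v)}\,\hat{\mu}(l_m(T;v),\ldots,l_1(T;v))$ for every $T$. Letting $T\rr\infty$, Lemma~\ref{lem: lfunction_asympt_finiteSystem} gives $l_i(T;v)\rr 0$, hence $\hat{\mu}(l_m(T;v),\ldots,l_1(T;v))\rr 1$ by continuity of the Laplace transform at $0$, while $e^{-c_m(T;v)}\rr \hat{\pi}^m_A(v)$ by (\ref{eqn: Lap_truncated_asymptotic}). Therefore $\hat{\mu}(v)=\hat{\pi}^m_A(v)$ on $\R^m_+$, forcing $\mu=\pi^m_A$.

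The main bookkeeping obstacle is Step~2: the argument of $\hat{\pi}^m_A$ reverses the index order (a legacy of the transformation $l_k(t)=B_{m+1-k}(T-t)$), so one must track two index reversals to see that plugging $(l_m(T;v),\ldots,l_1(T;v))$ into $\hat{\pi}^m_A$ corresponds precisely to restarting the forward ODE at $l_i(T;v)$. Everything else is a straightforward change of variable in the integral defining $\hat{\pi}^m_A$; an alternative route, if preferred, is to verify directly that $\hat{\pi}^m_A$ solves the PDE in Lemma~\ref{thm: statCondTruncSys} by differentiating through (\ref{eqn: Lap_truncated_asymptotic}) and using the ODE relations $\dot{l}_{2k+1}(0),\dot{l}_{2k+2}(0)$ to identify the jump terms, but the semigroup route is notably shorter.
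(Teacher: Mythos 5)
Your proposal is correct, but it takes a genuinely different route from the paper. The paper proves stationarity through the infinitesimal characterization: it invokes Lemma~\ref{thm: statCondTruncSys} and verifies that $\hat{\pi}^m_A$ satisfies the generator equation (\ref{eqn: StatEq_v1}), which after substituting the integral representations of $\partial\hat{\pi}^m/\partial v_k$ reduces to the pointwise identity (\ref{eqn: StatEq_v2}) for the $l$-functions; that identity is then established by an induction on the system index $m=2n$ that exploits the self-similarity of the branching structure (shifting the initial data $(\td{v}_1,\ldots,\td{v}_{2n})=(v_3,\ldots,v_{2n+2})$). You instead work with the integrated semigroup: the conditional Laplace formula behind (\ref{eqn: Laplace_L_m_T_forward}) holds for arbitrary initial states, the ODE system (\ref{eqn: ForwardODE}) is autonomous and lower-triangular so its flow property gives $\bar{l}_i(s)=l_i(T+s;v)$, and the double index reversal you flag does work out ($\tilde{l}_i(0)=w_{m+1-i}=l_i(T;v)$), so invariance of $\pi^m_A$ follows from splitting $\int_0^\infty=\int_0^T+\int_T^\infty$. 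This avoids the induction entirely. Your approach also buys a cleaner uniqueness argument: letting $T\rr\infty$ in $\hat{\mu}(v)=e^{-c_m(T;v)}\hat{\mu}(l_m(T;v),\ldots,l_1(T;v))$ and using Lemma~\ref{lem: lfunction_asympt_finiteSystem} forces any stationary $\mu$ to coincide with $\pi^m_A$, whereas the paper only asserts that uniqueness ``follows from the uniqueness of $\pi^m_A$.'' The one foundation both arguments share, and which neither fully belabors, is that the exponential-affine expression is a true martingale (it is bounded on $[0,T]$, so this is harmless); beyond that your route is complete and somewhat shorter.
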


\begin{proof}
For the existence, it is sufficient to show that $\hat{\pi}^m:=\hat{\pi}^m_A$
satisfies the condition equation (\ref{eqn: StatEq_v1}). The uniqueness of
such $\pi^m_S$ follows from the uniqueness of $\pi^m_A$.

Since $\lim_{t\rr\infty}l_{2n}(t)
= \lim_{t\rr \infty} l_{2n-1}(t) = 0$ from Lemma~\ref{lem: lfunction_asympt_finiteSystem},
\begin{eqnarray*}
\frac{\partial\hat{\pi}^m}{\partial v_{2k-1}} &=& \hat{\pi}^m\left[\rho_1
\int_0^\infty   \hat{h}'_1(l_{2n}(t))\frac{\partial l_{2n}(t)}{\partial v_{2k-1}}
dt +\rho_2 \int_0^\infty \hat{h}'_2(l_{2n-1}(t))\frac{\partial l_{2n-1}(t)}{\partial
v_{2k-1}} dt \right]\\
\frac{\partial\hat{\pi}^m}{\partial v_{2k}} &=& \hat{\pi}^m \left[\rho_1
\int_0^\infty \hat{h}'_1(l_{2n}(t))\frac{\partial l_{2n}(t)}{\partial v_{2k}}
dt +\rho_2 \int_0^\infty \hat{h}'_2(l_{2n-1}(t))\frac{\partial l_{2n-1}(t)}{\partial
v_{2k}} dt \right]\\
1-\hat{h}_1({v_1}) &=& \int_0^\infty \hat{h}'_1(l_{2n}(t))\dot{l}_{2n}(t)
dt\\
1-\hat{h}_2({v_2}) &=& \int_0^\infty \hat{h}'_2(l_{2n-1}(t))\dot{l}_{2n-1}(t)
dt.
\end{eqnarray*}

Then, the stationarity equation (\ref{eqn: StatEq_v1}) becomes
\begin{eqnarray*}
0 &=& \rho_1 \int_0^\infty \hat{h}'_1(l_{2n}(t)) \left[\sum_{k=1}^n \left(\dot{l}_{2(n-k)+2}(0)\frac{\partial
l_{2n}(t)}{\partial v_{2k-1}} + \dot{l}_{2(n-k)+1}(0)\frac{\partial
l_{2n}(t)}{\partial v_{2k}} \right) -\dot{l}_{2n}(t)\right]dt \\
&& + \rho_2\int_0^\infty \hat{h}'_2(l_{2n-1}(t)) \left[\sum_{k=1}^n \left(
\dot{l}_{2(n-k)+2}(0)\frac{\partial
l_{2n-1}(t)}{\partial v_{2k-1}} +  \dot{l}_{2(n-k)+1}(0)\frac{\partial
l_{2n-1}(t)}{\partial v_{2k}}\right)  -\dot{l}_{2n-1}(t)\right] dt.
\end{eqnarray*}

Since the functions $l_k$ is independent from the choice of $\hat{h}_i$ and
$\rho_i$ for $i=1,2$, for system index $m=2n$, re-denote $l_k(\cdot)$ as
$l^{2n}_k(\cdot)$, then it is sufficient to show:
\begin{equation}\label{eqn: StatEq_v2}
  \begin{aligned}
\sum_{k=1}^n\left( \dot{l}^{2n}_{2(n-k)+2}(0)\frac{\partial l^{2n}_{2n}(t)}{\partial
v_{2k-1}} + \dot{l}^{2n}_{2(n-k)+1}(0)\frac{\partial l^{2n}_{2n}(t)}{\partial
v_{2k}} \right) -\dot{l}^{2n}_{2n}(t) &=& 0, \\
\sum_{k=1}^n \left( \dot{l}^{2n}_{2(n-k)+2}(0)\frac{\partial l^{2n}_{2n-1}(t)}{\partial
v_{2k-1}} +  \dot{l}^{2n}_{2(n-k)+1}(0)\frac{\partial l^{2n}_{2n-1}(t)}{\partial
v_{2k}} \right) - \dot{l}^{2n}_{2n-1}(t) &=& 0.
  \end{aligned}
\end{equation}

By observing the self-similarity in the system structure, we prove (\ref{eqn:
StatEq_v2}) using the induction with respect to the system index $m = 2n$.
\begin{enumerate}
\item[(1)] For $n=1$, it is easy to observe that $l^2_1(t) = v_2 e^{-\delta_2
t}$ and $l^2_2(t) = v_1 e^{-\delta_1 t}$ satisfies (\ref{eqn: StatEq_v2}).

\item[(2)] Assume that $m=2n$ satisfies (\ref{eqn: StatEq_v2}), we show that
for $m=2(n+1)$ also satisfies (\ref{eqn: StatEq_v2}). One needs to show the
first equation as follows and the second follows in the same way. I.e.,
\begin{eqnarray}\label{eqn: induction_l_n+1}
\sum_{k=1}^{n+1}\left[ \dot{l}^{2(n+1)}_{2(n+1-k)+2}(0)\frac{\partial l^{2(n+1)}_{2n+2}(t)}{\partial
v_{2k-1}} + \dot{l}^{2(n+1)}_{2(n+1-k)+1}(0)\frac{\partial l^{2(n+1)}_{2n+2}(t)}{\partial
v_{2k}} \right] -\dot{l}^{2(n+1)}_{2n+2}(t) &=& 0, \nonumber \\
\end{eqnarray}
where $l_{2(n+1)-i+1}^{2(n+1)}(0) = v_i$ for $i=1,\ldots,2n$. 

Note that from the ODE system and the recursive solution, we have
$$l^{2(n+1)}_{2(n+1)}(0) = v_1,\quad \frac{\partial l^{2(n+1)}_{2(n+1)}(t)}{\partial
v_1} = e^{-\delta_1 t},\quad \frac{\partial l^{2(n+1)}_{2(n+1)}(t)}{\partial
v_2} = 0 .$$ Then, the $k=1$ term in (\ref{eqn: induction_l_n+1}) becomes
$ \dot{l}^{2(n+1)}_{2n+2}(0)e^{-\delta_1 t}$. Hence, we need to show that
$$ \sum_{k=2}^{n+1} \left[\dot{l}^{2(n+1)}_{2(n+1-k)+2}(0)\frac{\partial
l^{2(n+1)}_{2n+2}(t)}{\partial
v_{2k-1}} + \dot{l}^{2(n+1)}_{2(n+1-k)+1}(0)\frac{\partial l^{2(n+1)}_{2n+2}(t)}{\partial
v_{2k}}  \right] = \dot{l}^{2(n+1)}_{2n+2}(t) -e^{-\delta_1
t} \dot{l}^{2(n+1)}_{2n+2}(0).$$

As (\ref{eqn: StatEq_v2}) holds for $m=2n$ for
all $(v_1,\ldots,v_{2n}) \in
\R^m_+$, one can construct functions $\{L^{2n}_i(\cdot)\}_{i=1}^{2n}$, such
that they satisfies  (\ref{eqn: StatEq_v2})  with initial values $$ (\td{v}_1,\ldots,
\td{v}_{2n}) = (v_3,\ldots,v_{2n+2}).$$ 
 
Hence, \begin{equation}\label{eqn: L_2n}
\begin{aligned}
\sum_{k=1}^n\left(\dot{L}^{2n}_{2(n-k)+2}(0)\frac{\partial L^{2n}_{2n}(t)}{\partial
\td{v}_{2k-1}} + \dot{L}^{2n}_{2(n-k)+1}(0)\frac{\partial L^{2n}_{2n}(t)}{\partial
\td{v}_{2k}} \right) -\dot{L}^{2n}_{2n}(t) &=& 0\\
\sum_{k=1}^n\left(\dot{L}^{2n}_{2(n-k)+2}(0)\frac{\partial L^{2n}_{2n-1}(t)}{\partial
\td{v}_{2k-1}} + \dot{L}^{2n}_{2(n-k)+1}(0)\frac{\partial L^{2n}_{2n-1}(t)}{\partial
\td{v}_{2k}} \right) -\dot{L}^{2n}_{2n-1}(t) &=& 0.
\end{aligned}
\end{equation}
with $L_{2n-i+1}^{2n}(0) = \td{v}_i = v_{i+2}$ for $i=1,\ldots,2n$. Especially
$L_1^{2n}(0)=v_{2n+2}$ and
$L_2^{2n}(0)=v_{2n+1}$.

By construction construction,  for $k=1,\ldots, n$, $t\geq
0$,
\begin{equation*}
 l^{2(n+1)}_{2k-1}(t) =  L^{2n}_{2k-1}(t),\quad l^{2(n+1)}_{2k}(t) =  L^{2n}_{2k}(t),
\end{equation*}
and
\begin{equation*}
\frac{\partial l^{2(n+1)}_{2n}(t)}{\partial v_{2k-1}} = \frac{\partial L^{2n}_{2n}(t)}{\partial
v_{2k-1}} = \frac{\partial L^{2n}_{2n}(t)}{\partial \td{v}_{2k-3}},\quad
\frac{\partial l^{2(n+1)}_{2n}(t)}{\partial v_{2k}} = \frac{\partial L^{2n}_{2n}(t)}{\partial
v_{2k}} = \frac{\partial L^{2n}_{2n}(t)}{\partial \td{v}_{2k-2}}.  
\end{equation*}

For terms with $k\geq 2$ in in (\ref{eqn: induction_l_n+1}),
\begin{eqnarray}
&&\sum_{k=2}^{n+1} \left[\dot{l}^{2(n+1)}_{2(n+1-k)+2}(0)\frac{\partial l^{2(n+1)}_{2n+2}(t)}{\partial
v_{2k-1}} + \dot{l}^{2(n+1)}_{2(n+1-k)+1}(0)\frac{\partial l^{2(n+1)}_{2n+2}(t)}{\partial
v_{2k}}  \right] \nonumber \\
&=& -  \int_0^t  e^{-\delta_1 (t-s)} \hat{g}'_{1,1}\left(L^{2n}_{2n}(s) \right)\sum_{k=2}^{n+1}\left[
\dot{L}^{2n}_{2(n+1-k)+2}(0) \frac{\partial L^{2n}_{2n}(t)}{\partial \td{v}_{2k-3}}
+ \dot{L}^{2n}_{2(n+1-k)+1}(0) \frac{\partial L^{2n}_{2n}(t)}{\partial \td{v}_{2k-2}}
\right]ds\nonumber\\
&& - \int_0^t  e^{-\delta_1 (t-s)} \hat{g}'_{2,1}\left(L^{2n}_{2n-1}(s) \right)\sum_{k=2}^{n+1}\left[
\dot{L}^{2n}_{2(n+1-k)+2}(0) \frac{\partial L^{2n}_{2n-1}(t)}{\partial \td{v}_{2k-3}}
+ \dot{L}^{2n}_{2(n+1-k)+1}(0) \frac{\partial L^{2n}_{2n-1}(t)}{\partial
\td{v}_{2k-2}} \right]ds \nonumber\\
&=& 
- \int_0^t  e^{-\delta_1 (t-s)} \hat{g}'_{1,1}\left(L^{2n}_{2n}(s) \right)\sum_{k=1}^{n}\left[
\dot{L}^{2n}_{2(n-k)+2}(0) \frac{\partial L^{2n}_{2n}(t)}{\partial \td{v}_{2k-1}}
+ \dot{L}^{2n}_{2(n-k)+1}(0) \frac{\partial L^{2n}_{2n}(t)}{\partial \td{v}_{2k}}
\right]ds\nonumber\\
&&- \int_0^t  e^{-\delta_1 (t-s)} \hat{g}'_{2,1}\left(L^{2n}_{2n-1}(s) \right)\sum_{k=1}^{n}\left[
\dot{L}^{2n}_{2(n-k)+2}(0) \frac{\partial L^{2n}_{2n-1}(t)}{\partial \td{v}_{2k-1}}
+ \dot{L}^{2n}_{2(n-k)+1}(0) \frac{\partial L^{2n}_{2n-1}(t)}{\partial \td{v}_{2k}}
\right]ds. \nonumber\\ \label{eqn: InductionProofEqn1}
\end{eqnarray}

By (\ref{eqn: L_2n}), (\ref{eqn: InductionProofEqn1}) becomes 
\begin{eqnarray*}\label{eqn: Induction_intermediate}
&& - \int_0^t  e^{-\delta_1 (t-s)} \hat{g}'_{1,1}\left(L^{2n}_{2n}(s) \right)
\dot{L}^{2n}_{2n}(s)ds  
- \int_0^t  e^{-\delta_1 (t-s)} \hat{g}'_{2,1}\left(L^{2n}_{2n-1}(s) \right)
\dot{L}^{2n}_{2n-1}(s)ds \nonumber  \\
&=& \int_0^t  e^{-\delta_1 (t-s)} \frac{\partial}{\partial s} \left[ 1 -
\hat{g}_{1,1}\left({L}^{2n}_{2n}(s)\right) +1 - \hat{g}_{2,1}\left({L}^{2n}_{2n-1}(s)
\right) \right] ds\nonumber\\
&=& \int_0^t  e^{-\delta_1 (t-s)} \frac{\partial}{\partial s} \left[ 1 -
\hat{g}_{1,1}\left({l}^{2(n+1)}_{2n}(s)\right) + 1 - \hat{g}_{2,1}\left({l}^{2(n+1)}_{2n-1}(s)
\right) \right] ds\nonumber\\
&\stackrel{(\ref{eqn: ForwardODE})}{=}&  \int_0^t  e^{-\delta_1 (t-s)} \frac{\partial}{\partial
s} \left[\dot{l}^{2(n+1)}_{2n+2}(s) + \delta_1 l^{2(n+1)}_{2n+2}(s)\right]
ds
\end{eqnarray*}

Denote $$F(s):= e^{\delta_1 s}l^{2(n+1)}_{2n+2}(s),$$ 
then $\dot{F}(s) = e^{\delta_1
s}\left(\dot{l}^{2(n+1)}_{2n+2}(s) + \delta_1 l^{2(n+1)}_{2n+2}(s)\right)$,
and $\dot{F}(0) =\dot{l}^{2(n+1)}_{2n+2}(0) + \delta_1 v_1$.
\begin{eqnarray*}
(\ref{eqn: InductionProofEqn1})
&=& e^{-\delta_1 t}\int_0^t e^{\delta_1 s}d\left(e^{-\delta_1  s}\dot{F}(s)\right)
\\
&=& e^{-\delta_1 t} \left( \dot{F}(t) - \dot{F}(0) \right) - \delta_1 e^{-\delta_1
t} \left(F(t) - F(0) \right)\\
&=& \left(\dot{l}^{2(n+1)}_{2n+2}(t) + \delta_1
l^{2(n+1)}_{2n+2}(t)\right)-e^{-\delta_1 t}\left(\dot{l}^{2(n+1)}_{2n+2}(0)
+ \delta_1 v_1\right) - \delta_1 e^{-\delta_1 t}\left( e^{\delta_1
t}l^{2(n+1)}_{2n+2}(t) - v_1\right)\\
&=& \dot{l}^{2(n+1)}_{2n+2}(t) -e^{-\delta_1 t}\dot{l}^{2(n+1)}_{2n+2}(0).
\end{eqnarray*}
\end{enumerate}

\end{proof}

Hence, by Theorem~\ref{prop: AsyEqStat} and Theorem~\ref{thm: LimitingDistributionAll},
we can conclude that for the finite system $\left(\Lambda^{(1)}_t,\ldots,
\Lambda^{(m)}_t\right)$, there exists a unique stationary distribution that
is equal to the limiting distribution.

\begin{corollary}\label{sec: Stationarity_BDCP_lambdaN}
There exists a unique stationary distribution $\mu^n_S$ of $\left(\lambda^{1,n}_t,\lambda^{2,n}_t\right)$,
and it is equal to the limiting distribution $\mu^n_A$ with the Laplace transform
(\ref{eqn: Lapalace_truncated_v2}).

\end{corollary}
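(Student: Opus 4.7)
The plan is to exploit the identity $(\lambda^{1,n}_t, \lambda^{2,n}_t) = \bigl(\sum_{i=1}^n \Lambda^{(2i-1)}_t, \sum_{i=1}^n \Lambda^{(2i)}_t\bigr)$, which realises $(\lambda^{1,n}, \lambda^{2,n})$ as a continuous linear functional of the Markov system $(\Lambda^{(1)}, \ldots, \Lambda^{(m)})$ with $m = 2n$. The heavy lifting has already been done in Theorem~\ref{prop: AsyEqStat} combined with Theorem~\ref{thm: LimitingDistributionAll}(1), which together guarantee a unique stationary distribution $\pi^m_S$ for the joint system that coincides with its limiting distribution $\pi^m_A$. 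The corollary will then follow from a straightforward pushforward argument.

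For existence, I would initialise the joint system from $\pi^m_S$, so that the joint process is strictly stationary. Strict stationarity transfers immediately to the projected process $(\lambda^{1,n}, \lambda^{2,n})$, whose one-dimensional marginal is the pushforward $\mu^n_S := P_{\ast}\pi^m_S$ under the linear map $P(\lambda_1, \ldots, \lambda_m) := \bigl(\sum_{i=1}^n \lambda_{2i-1}, \sum_{i=1}^n \lambda_{2i}\bigr)$. To identify $\mu^n_S$ with $\mu^n_A$, I would compute the Laplace transform of $\mu^n_S$ by specialising that of $\pi^m_S = \pi^m_A$ to $v_{2i-1} = v_1$ and $v_{2i} = v_2$ for $i = 1, \ldots, n$; this is precisely the specialisation already used in the derivation of equation~(\ref{eqn: Lapalace_truncated_v2}), so $\hat{\mu}^n_S = \hat{\mu}^n_A$, and hence $\mu^n_S = \mu^n_A$ by uniqueness of Laplace transforms.

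For uniqueness, the key observation is that the evolution of $(\lambda^{1,n}_t, \lambda^{2,n}_t)$ is intrinsically driven by the branching structure encoded in $(\Lambda^{(1)}_t, \ldots, \Lambda^{(m)}_t)$, so any initial law making the projection stationary must correspond to a stationary distribution of the joint Markov system, which is uniquely $\pi^m_S$; the projected stationary law is therefore unique. I expect the only subtlety (rather than a genuine obstacle) to lie in this last point: because $(\lambda^{1,n}, \lambda^{2,n})$ is not itself Markov, uniqueness has to be argued one level up, at the joint system where the generator theory of Section~\ref{sec: MarkovProperty_FiniteSystem} applies. Given the apparatus already in place, this is essentially a bookkeeping step.
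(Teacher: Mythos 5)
Your proposal matches the paper's proof in all essentials: both realise $(\lambda^{1,n},\lambda^{2,n})$ as the linear projection of the joint Markov system $(\Lambda^{(1)},\ldots,\Lambda^{(m)})$ started from its unique stationary law $\pi^m_S=\pi^m_A$ (Theorem~\ref{prop: AsyEqStat} plus Theorem~\ref{thm: LimitingDistributionAll}), transfer strict stationarity through the pushforward, and identify $\hat{\mu}^n_S$ with $\hat{\mu}^n_A$ by the specialisation $v_{2i-1}=v_1$, $v_{2i}=v_2$ already used to derive (\ref{eqn: Lapalace_truncated_v2}). The only (minor) divergence is the uniqueness step: the paper deduces it directly from the fact that the limiting distribution $\mu^n_A$ exists and is independent of the initial condition, whereas you argue it by lifting an arbitrary stationary law of the non-Markov projection back to a stationary law of the joint system --- a lift that is not entirely automatic; the paper's route avoids this.
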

\begin{proof}

The joint distribution of $\left(\lambda^{1,n}_t,\lambda^{2,n}_t\right)$
at any $t\geq 0$, $k\geq 0$ and $0\leq s_1\leq \ldots \leq s_k$ is
\begin{eqnarray*}
&&\Prob\left(\lambda^{1,n}_{t+s_1}\leq x_1^1,\lambda^{2,n}_{t+s_1}\leq x_1^2;\ldots
; \lambda^{1,n}_{t+s_k}\leq x_k^1,\lambda^{2,n}_{t+s_k}\leq x_k^2 \right)
\nonumber\\
&=& \Prob\left(\sum_{i=1}^n \lambda^{1,(i)}_{t+s_1}\leq x_1^1 , \sum_{i=1}^n
\lambda^{2,(i)}_{t+s_1}\leq x_1^2; \ldots ;  \sum_{i=1}^n \lambda^{1,(i)}_{t+s_k}\leq
x_k^1 , \sum_{i=1}^n \lambda^{2,(i)}_{t+s_k}\leq x_k^2 \right)\\
&=& \int_{D_1^1}\int_{D_1^2} \cdots \int_{D_k^1}\int_{D_k^2} d\Prob\left(\lambda^{1,(1)}_{t+s_1}\leq
z^{1,(1)}_1,\lambda^{2,(1)}_{t+s_1}\leq z^{2,(1)}_1,\ldots, \lambda^{1,(n)}_{t+s_k}
\leq z^{1,(n)}_k, \lambda^{2,(n)}_{t+s_k} \leq z^{2,(n)}_k \right),  
\end{eqnarray*}
where for $j=1,\ldots,k, j' = 1,2$, $D_{j}^{j'} = \left\{\left(z^{j',(1)}_j,\ldots,z^{j',(n)}_j\right)\in
\mathbb{R}^{n}: \sum_{i=1}^n z^{j',(i)}_j \leq x^{j'}_j\right\}$.

By Theorem~\ref{prop: AsyEqStat}, take the distribution of $\left(\lambda^{1,(1)}_t,\lambda^{2,(1)}_t,\ldots,\lambda^{1,(n)}_t,\lambda^{2,(n)}_t\right)=\left(\Lambda^{(1)},\ldots,\Lambda^{(m)}\right)$
as the unique stationary distribution $\pi^m_S$, then the joint
distribution above is independent of $t$. Hence from the equation above,
the distribution $\left(\lambda^{1,n}_{t+s_1},\lambda^{2,n}_{t+s_1};\ldots
; \lambda^{1,n}_{t+s_k},\lambda^{2,n}_{t+s_k} \right)$ is also independent
of $t$. Therefore by definition $\left(\lambda^{1,n}_t,\lambda^{2,n}_t\right)$
is a stationary process. 

Since the limiting distribution exits and independent from the initial condition,
then $\mu^n_S \stackrel{d}{=} \mu^n_A$ and the uniqueness follows. 
\end{proof}

Now we present  the existence and uniqueness
of stationary distribution for $(\lambda^1_t,\lambda^2_t)$ which is the main
result of this paper.

\begin{theorem}[Existence of Stationary Distribution]\label{thm: stationarity_BDCP}
Under (C1) and (C2), there exists a unique stationary distribution $\mu^*_S$
for the BDCP intensity $\left(\lambda^1_t,\lambda^2_t
\right)$, and moreover $\mu^*_S \stackrel{d}{=} \mu^*_A$.
\end{theorem}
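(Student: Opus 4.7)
The plan is to lift stationarity from the truncated process $(\lambda^{1,n},\lambda^{2,n})$ --- which, by Corollary~\ref{sec: Stationarity_BDCP_lambdaN}, is stationary with law $\mu^n_S = \mu^n_A$ --- to the full BDCP intensity through pathwise monotone convergence, and then identify the resulting stationary law with $\mu^*_A$ using the Laplace-transform convergence already established in Theorem~\ref{thm: LimitingDistributionAll}(2). Condition (C2) is what underpins both halves of the argument: it ensures the non-degeneracy of $\mu^*_A$ and the summability of the auxiliary sequence $\{l_i(t)\}$ that controls how the initial condition is forgotten.

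For existence, I would first build a consistent initial law for the whole branching system. For each $n$, let the joint initial condition $(\Lambda^{(1)}_0,\ldots,\Lambda^{(2n)}_0)$ have distribution $\pi^{2n}_S$; by Corollary~\ref{sec: Stationarity_BDCP_lambdaN} this makes $(\lambda^{1,n},\lambda^{2,n})$ strictly stationary. Because generation-$k$ dynamics depend only on strictly lower-index generations, the family $\{\pi^{2n}_S\}_{n\geq 1}$ is Kolmogorov-consistent and extends to an initial law on the infinite branching system $(\Lambda^{(1)}_0,\Lambda^{(2)}_0,\ldots)$. Under (C2), Theorem~\ref{thm: LimitingDistributionAll}(2) ensures the pathwise limit $\lambda^k_t = \lim_{n\rr\infty}\lambda^{k,n}_t$ is a.s.\ finite for every $t\geq 0$. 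For any $0\leq s_1\leq\cdots\leq s_k$ and bounded continuous $F$ on $\R^{2k}_+$, the stationarity of the truncated process gives that $\E[F(\lambda^{1,n}_{t+s_1},\lambda^{2,n}_{t+s_1},\ldots,\lambda^{1,n}_{t+s_k},\lambda^{2,n}_{t+s_k})]$ is independent of $t\geq 0$; letting $n\rr\infty$ through monotone (then dominated) convergence transfers this $t$-invariance to $(\lambda^1,\lambda^2)$. The one-dimensional marginal of the resulting stationary process is $\lim_n \mu^n_A = \mu^*_A$.

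For uniqueness, let $\nu$ be any stationary law for $(\lambda^1,\lambda^2)$. Starting from a deterministic state $(\lambda^1_0,\lambda^2_0) = (x,y)$, taking the $n\rr\infty$ limit of (\ref{eqn: Laplace_L_m_T_forward}) with $v_j = 0$ for $j\geq 3$ gives $\E_{(x,y)}[e^{-v_1\lambda^1_T - v_2\lambda^2_T}] = \exp(-\ell_2^*(T)\,x - \ell_1^*(T)\,y - c^*(T))$, where $\ell_i^*(T)$ is the monotone limit in $n$ of $l_{2n+2-i}(T)$ and $c^*(T)$ the corresponding limit of $c_m(T)$. Under (C2), $\ell_i^*(T)\rr 0$ and $c^*(T)\rr -\log\hat{\mu}^*_A(v_1,v_2)$ as $T\rr\infty$ --- the first by combining Lemma~\ref{lem: lfunction_asympt_finiteSystem} with the summability bounds from the proof of Theorem~\ref{thm: LimitingDistributionAll}. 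Integrating against $\nu$ and invoking bounded convergence, $\hat\nu(v_1,v_2) = \hat{\mu}^*_A(v_1,v_2)$ on all of $\R^2_+$, forcing $\nu = \mu^*_A$. The hardest step will be the rigorous construction of the consistent infinite-dimensional initial law and the interchange of the $n\rr\infty$ and $t$-translation limits in multi-time joint expectations; both are routine monotone-convergence arguments once (C2) secures the a.s.\ finiteness needed to dominate the passages to the limit.
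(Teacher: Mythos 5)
Your proposal is correct and follows essentially the same route as the paper: start the truncated system from its stationary law $\mu^n_S$ (Corollary~\ref{sec: Stationarity_BDCP_lambdaN}), transfer the $t$-invariance of finite-dimensional distributions to the pathwise limit $(\lambda^1,\lambda^2)$, and identify the resulting marginal with $\mu^*_A$, with uniqueness coming from the fact that the limiting distribution forgets the initial condition. Your additional care about the Kolmogorov consistency of $\{\pi^{2n}_S\}$ (needed to couple the truncations on one space) and the explicit conditional-Laplace-transform argument for uniqueness merely make rigorous two steps the paper leaves implicit.
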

\begin{proof}
Let $(\lambda^{1,n},\lambda^{2,n})$ starts from the stationary distribution
$\mu^n_S$, then for any $t\geq 0$, $0\leq s_1\leq \ldots \leq s_k$,
\begin{eqnarray}\label{eqn: stat_lambda_n}
\left(\lambda^{1,n}_{t+s_1},\lambda^{2,n}_{t+s_1}; \ldots ;\lambda^{1,n}_{t+s_k},\lambda^{2,n}_{t+s_k}
\right) \stackrel{d}{=} \left(\lambda^{1,n}_{s_1},\lambda^{2,n}_{s_1}; \ldots
;\lambda^{1,n}_{s_k},\lambda^{2,n}_{s_k} \right).
\end{eqnarray}

Since $\left(\lambda^{1,n}_t,\lambda^{2,n}_t \right)$ converges to $\left(\lambda^1_t,\lambda^2_t
\right)$ pathwise implying the convergence of finite dimensional distribution.
As $n \rr \infty$,
\begin{eqnarray*}
\left(\lambda^{1,n}_{t+s_1},\lambda^{2,n}_{t+s_1}; \ldots ;\lambda^{1,n}_{t+s_k},\lambda^{2,n}_{t+s_k}
\right)  &\Rightarrow & \left(\lambda^1_{t+s_1},\lambda^2_{t+s_1}; \ldots
;\lambda^1_{t+s_k},\lambda^2_{t+s_k} \right), \quad n \rr\infty,\\
\left(\lambda^{1,n}_{s_1},\lambda^{2,n}_{s_1}; \ldots ;\lambda^{1,n}_{s_k},\lambda^{2,n}_{s_k}
\right)  &\Rightarrow & \left(\lambda^1_{s_1},\lambda^2_{s_1}; \ldots ;\lambda^1_{s_k},\lambda^2_{s_k}
\right), \quad n \rr\infty.
\end{eqnarray*}
where the left land side distribution is $\pi^m_S$.

By (\ref{eqn: stat_lambda_n}) and uniqueness of the weak limit, we have the
limiting process
\begin{eqnarray*}
\left(\lambda^{1}_{t+s_1},\lambda^{2}_{t+s_1}; \ldots ;\lambda^{1}_{t+s_k},\lambda^{2}_{t+s_k}
\right) \stackrel{d}{=} \left(\lambda^{1}_{s_1},\lambda^{2}_{s_1}; \ldots
;\lambda^{1}_{s_k},\lambda^{2}_{s_k} \right).
\end{eqnarray*}
i.e. the finite dimensional distribution is independent of $t$. Hence $(\lambda^1,\lambda^2)$
has a stationary distribution $\mu^*_S$.

Since the limiting distribution $\mu^*_A$ exists and is independent with
the initial value, then $\mu^*_S\stackrel{d}{=} \mu^*_A$. As $\mu^*_A$
is unique, the uniqueness of $\mu^*_S$ follows.

\comment{
\begin{eqnarray*}
&& \Prob\left(\lambda^n_{t+s_1} \leq x_1,\ldots, \lambda^n_{t+s_k}\leq x_k
\right) \nonumber\\
&=&\Prob\left(\sum_{i=1}^n \lambda^{(i)}_{t+s_1}  \leq x_1,\ldots, \sum_{i=1}^n
\lambda^{(i)}_{t+s_k} \leq x_k \right)\nonumber\\
&=& \int_{D_1}\cdots \int_{D_k} d\Prob\left(\lambda^{(1)}_{t+s_1}\leq z^1_1,\ldots
\lambda^{(n)}_{t+s_1}\leq z^n_1; \cdots ; \lambda^{(1)}_{t+s_k}\leq z^1_k,\ldots
\lambda^{(n)}_{t+s_k}\leq z^n_k    \right)
\end{eqnarray*}
where $D_j = \{(z_1^j,\ldots,z_n^j) \in \mathbb{R}^n : \sum_{i=1}^n z^j_i
\leq x_j\}$ for $j=1,\ldots, k$.

As $\left(\lambda^{(1)}_t,\ldots,\lambda^{(n)}_t\right)$ is stationary, i.e.
$\Prob\left(\lambda^{(1)}_{t+s_1}\leq z^1_1,\ldots \lambda^{(n)}_{t+s_1}\leq
z^n_1; \cdots ; \lambda^{(1)}_{t+s_k}\leq z^1_k,\ldots \lambda^{(n)}_{t+s_k}\leq
z^n_k \right)$ is independent of $t$. Hence, we have $\Prob\left(\lambda^n_{t+s_1}
\leq x_1,\ldots, \lambda^n_{t+s_k}\leq x_k \right) $ is independent of $t$.
}
\end{proof}

\comment{
\begin{remark}
Denote the stationary distribution of $\left(\lambda^{1,n}_t,\lambda^{2,n}_t\right)$
as $\mu_n$ and $\left(\lambda^{1}_t,\lambda^{2}_t\right)$ as $\mu$, then
we have $\mu_n \Rightarrow \mu$. 
\end{remark}
}

\begin{remark}
Note that the Theorem~7 in Br\'emaud and Massouli\'e\cite{BM1996} has a similar
result  that is a special case of ours and the result can be recovered by
$(C1)$ for the bivariate case.   
\end{remark}

\begin{remark}
From the analysis above, besides the BDCP system $(\lambda^1,\lambda^2)$,
we also provide the distribution
of non-stationary and stationary version of  $(\lambda^{1,n},\lambda^{2,n})$
in terms of Laplace transform. Note that as one can choose the extent of
the contagion effect for the modelling purpose
using the finite system $(\lambda^{1,n},\lambda^{2,n})$. Therefore, it is
an interesting
process itself for applications and further analysis. 
\end{remark}

For any $h>0$, $N_{t_1+h} - N_{t_1}|_{\lambda_{t_1}=\lambda}\stackrel{d}{=}N_{t_2+h}
- N_{t_2}|_{\lambda_{t_2}=\lambda}$. If $\lambda_{t}$ is stationary, then
$\lambda_{t_1}\stackrel{d}{=} \lambda_{t_2}$, and  $N_{t_1+h} - N_{t_1}\stackrel
{d}{=}N_{t_2+h} - N_{t_2}$. Hence we have the following results:
\begin{corollary}\label{cor: Stationarity_BDCP_N}
The BDCP $N$ has stationary increments on $\R_+$.
\end{corollary}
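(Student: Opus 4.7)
The plan is to combine the stationarity of the intensity established in Theorem~\ref{thm: stationarity_BDCP} with the time‑homogeneous Markov (PDMP) structure of $(\lambda^1,\lambda^2)$. The key observation, already indicated in the displayed remark preceding the corollary, is that the conditional law of the future increment $N_{t+h}-N_t$ given $\lambda_t=\lambda$ depends only on $\lambda$ and $h$, not on $t$. This is because the generator $\A_m$ in Section~\ref{sec: MarkovProperty_FiniteSystem} has no explicit time dependence and the driving external Poisson inputs and mark distributions are time‑homogeneous, so the joint transition kernel of $(\lambda,N-N(\cdot))$ depends only on the elapsed time.

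First I would spell out the one‑dimensional increment case: initialise $\lambda_0\sim \mu^*_S$, so that by Theorem~\ref{thm: stationarity_BDCP} we have $\lambda_t\stackrel{d}{=}\lambda_0$ for every $t\ge 0$. Integrating the conditional distribution of $N_{t+h}-N_t$ given $\lambda_t$ against this common law gives $N_{t+h}-N_t \stackrel{d}{=} N_h-N_0$ for all $t,h\ge 0$.

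Next I would upgrade this to all finite‑dimensional distributions of increments. For $0\le u_1<\cdots <u_k$ and a shift $t\ge 0$, the joint law of $(N_{t+u_1}-N_t,\,N_{t+u_2}-N_{t+u_1},\dots,N_{t+u_k}-N_{t+u_{k-1}})$ can be written by iterated conditioning: condition on $\lambda_t$, then on $\lambda_{t+u_1}$, and so on, each time applying the time‑homogeneous PDMP transition kernel on $(\lambda,\text{future count})$. The only place where $t$ could enter is through the law of $\lambda_t$, and stationarity of $\lambda$ removes that dependence. Comparing with the same computation performed at shift $0$ gives equality in distribution, which is precisely stationarity of the increments of $N$.

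There is no real obstacle; the content is already in the preceding remark. The only point that requires care is justifying that the joint process $(\lambda_t,N_t)$ is Markov with a time‑homogeneous kernel so that the iterated‑conditioning step is valid. This follows directly from the PDMP construction of Definition~\ref{def:BDCP_intensity}: the jumps of $N$ coincide with the jumps of $\lambda$, and both sets of jump intensities and mark laws are autonomous in $t$.
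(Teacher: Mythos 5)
Your argument is essentially the paper's own: the paper proves the corollary in the two-line remark preceding it, by observing that $N_{t+h}-N_t$ conditional on $\lambda_t=\lambda$ has a law depending only on $(\lambda,h)$ by time-homogeneity, and then invoking the stationarity of $\lambda$ from Theorem~\ref{thm: stationarity_BDCP}. You additionally spell out the extension to joint laws of several increments via iterated conditioning, which the paper leaves implicit but which follows by the same mechanism.
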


\section{Stationary Moments} \label{sec: StationaryMoments}

From Theorem~\ref{thm: stationarity_BDCP}, $(\lambda^1_t,\lambda^2_t)$ has
a unique stationary distribution $\mu_S^{*}$.
By Proposition 9.2, Chapter 4 in Ethier and Kurtz \cite{EK1986}, for any
$f\in \mathcal{D}(\A)$, 
we have  
\begin{equation}\label{eqn: StatConditionForMoments}
\int_0^\infty \int_0^\infty \A f(\lambda_1,\lambda_2)\mu_S^{*}(\lambda_1,\lambda_2)d\lambda_1
d\lambda_2 = 0.
\end{equation}

In the following, we will use (\ref{eqn: StatConditionForMoments}) to derive
stationary mean, variance and correlation.

\subsection{Stationary Mean}
Take $\A f(\lambda_1,\lambda_2) = \lambda_i$ and denote $m^i_1 = \E[\lambda^i_t]$
as the stationary mean, we have
\begin{eqnarray*}
-(\delta_1-\mu_{G^{1,1}})m^1_1 + \mu_{G^{1,2}}m^2_1 + \rho_1 \mu_{H^1} &=&
0\\
-(\delta_2-\mu_{G^{2,2}})m^2_1 + \mu_{G^{2,1}}m^1_1 + \rho_2 \mu_{H^2} &=&
0.
\end{eqnarray*} 

Solving this linear equation system we obtain the stationary mean as 
\begin{eqnarray*}
m^1_1 &=&\frac{(\delta_2 -\mu_{G^{2,2}})\mu_{H^1}}{(\delta_1 - \mu_{G^{1,1}})(\delta_2
-\mu_{G^{2,2}})-\mu_{G^{1,2}}\mu_{G^{2,1}}}\rho_1
+\frac{\mu_{G^{1,2}}\mu_{H^2}}{(\delta_1 - \mu_{G^{1,1}})(\delta_2 -\mu_{G^{2,2}})-\mu_{G^{1,2}}\mu_{G^{2,1}}}\rho_2
\\
m^2_1 &=&\frac{(\delta_1 -\mu_{G^{1,1}})\mu_{H^2}}{(\delta_1 - \mu_{G^{1,1}})(\delta_2
-\mu_{G^{2,2}})-\mu_{G^{1,2}}\mu_{G^{2,1}}}\rho_2
+\frac{\mu_{G^{2,1}}\mu_{H^1}}{(\delta_1 - \mu_{G^{1,1}})(\delta_2 -\mu_{G^{2,2}})-\mu_{G^{1,2}}\mu_{G^{2,1}}}\rho_1.\label{ean:
statMean}
\end{eqnarray*}
Denote $\Delta_i = \delta_i - \mu_{G^{i,i}}$ for $i=1,2$, and $\Delta :=
\Delta_1\Delta_2 - \mu_{G^{1,2}}\mu_{G^{2,1}}$, then we can rewrite the first
moments as
\begin{eqnarray*}
m_1^1 &=& \frac{\Delta_2 \mu_{H^1}}{\Delta} \rho_1 + \frac{\mu_{G^{1,2}}\mu_{H^2}}{\Delta}
\rho_2 =: \mu_{1,1}\rho_1 + \mu_{1,2}\rho_2\\
m_1^2 &=& \frac{\mu_{G^{2,1}}\mu_{H^1}}{\Delta}\rho_1 + \frac{\Delta_1 \mu_{H^2}}{\Delta}\rho_2
=: \mu_{2,1}\rho_1 + \mu_{2,2}\rho_2.
\end{eqnarray*}
\begin{remark}
If there is no cross-exciting term, i.e. $\mu_{G^{i,j}} = 0$ for $i\neq j$,
then the result recovers univariate DCP in Dassios and Zhao\cite{DZ2011}.

\end{remark}

\subsection{Stationary Variance}
We consider the stationary moments of $(\lambda_t^1)^2$, $(\lambda_t^2)^2$
and $\lambda_t^1 \lambda_t^2$.

Take $f(t,\lambda^1_t,\lambda^2_t) = (\lambda^1_t)^2$ and $f(t,\lambda^1_t,\lambda^2_t)
= (\lambda^2_t)^2$, we have
\begin{eqnarray*}
\A (\lambda_1^2) &=& -2\delta_1 \lambda_1^2 + \rho_1\left[ \int_0^\infty
(\lambda_1+y_1)^2H^1(dy_1) -\lambda_1^2 \right] \\
&& + \lambda_1 \left[ \int_0^\infty (\lambda_1 + z_1)^2 G^{1,1}(dz_1) - \lambda_1^2
\right] + \lambda_2 \left[ \int_0^\infty (\lambda_1 + z_1)^2 G^{1,2}(dz_1)
- \lambda_1^2 \right]\\
&=& -2\delta_1 \lambda_1^2 + \rho_1 (2\lambda_1 \mu_{H^1}+ \mu_{2H^1}) +
\lambda_1 (2\lambda_1 \mu_{G^{1,1}} + \mu_{2G^{1,1}}) + \lambda_2 (2\lambda_1
\mu_{G^{1,2}}+ \mu_{2G^{1,2}})\\
&=&  -2(\delta_1- \mu_{G^{1,1}}) \lambda_1^2 + 2 \mu_{G^{1,2}} \lambda_1
\lambda_2 + (2\rho_1 \mu_{H^1}+\mu_{2G^{1,1}})\lambda_1 + \mu_{2G^{1,2}}
\lambda_2 +  \mu_{2H^1} \rho_1.
\end{eqnarray*}

Similarly, we have
\begin{eqnarray*}
\A (\lambda_2^2) &=&  -2(\delta_2 - \mu_{G^{2,2}}) \lambda_2^2 + 2 \mu_{G^{2,1}}
\lambda_1 \lambda_2 + (2\rho_2 \mu_{H^2}+\mu_{2G^{2,2}})\lambda_2 + \mu_{2G^{2,1}}
\lambda_1 +  \mu_{2H^2} \rho_2\\
\A \lambda_1 \lambda_2 &=& - (\delta_1+\delta_2) \lambda_1\lambda_2 + \rho_1
\lambda_2 \mu_{H^1} + \rho_2\lambda_1 \mu_{H^2} + \lambda_1 \left( \lambda_1
\mu_{G^{2,1}} + \lambda_2 \mu_{G^{1,1}} + \mu_{G^{1,1}}\mu_{G^{2,1}} \right)\\
&& + \lambda_2 (\lambda_1 \mu_{G^{2,2}} + \lambda_2 \mu_{G^{1,2}} + \mu_{G^{1,2}}
\mu_{G^{2,2}})\\
&=&  \mu_{G^{2,1}} \lambda_1^2 + \mu_{G^{1,2}} \lambda_2^2 + \left( - (\delta_1-
\mu_{G^{1,1}})- (\delta_2 - \mu_{G^{2,2}}) \right) \lambda_1\lambda_2 \\
&& + (\rho_2 \mu_{H^2} + \mu_{G^{1,1}}\mu_{G^{2,1}})\lambda_1 + (\rho_1\mu_{H^1}+\mu_{G^{1,2}}
\mu_{G^{2,2}})\lambda_2. 
\end{eqnarray*}

We can rewrite
\begin{eqnarray*}
\A (\lambda_1^2) 
&=:& A_{1,1}\lambda_1^2 + A_{1,2} \lambda_1\lambda_2 +  A_1 \lambda_1 + A_{2}\lambda_2
+ A_0\\
\A (\lambda_2^2)
&=:&  B_{2,2}\lambda_2^2 + B_{1,2} \lambda_1\lambda_2 +  B_2 \lambda_2  +
B_{1}\lambda_1 + B_0\\
\A \lambda_1 \lambda_2 
&=:& C_{1,1} \lambda_1^2 + C_{2,2}\lambda_2^2 + C_{1,2} \lambda_1\lambda_2
+ C_1 \lambda_1 + C_2 \lambda_2.
\end{eqnarray*}
with all coefficients in Table~\ref{tab: momentCoeffABC}. Note that $A_{i,j},B_{i,j},C_{i,j}$
do not contain $\rho$ and $A_1, B_2, C_1,C_2$ are linear with $\rho$. 

\begin{table}
\begin{center}
\begin{tabular}{|r|r|r|r|r|r|l|}
  \hline
$X$  & $X_{1,1}$ & $X_{2,2}$ & $X_{1,2}$ & $X_1$ & $X_2$ & $X_0$ \\
 \hline
$A$ & $-2\Delta_1$ & $0$ & $2\mu_{G^{1,2}}$ & $2\mu_{H^1}\rho_1 + \mu_{2G^{1,1}}$
& $\mu_{2G^{1,2}}$ & $\mu_{2H^1}\rho_1$ \\
  \hline
$B$ & $0$ & $-2\Delta_2$ & $2\mu_{G^{2,1}}$ & $\mu_{2G^{2,1}}$ & $2\mu_{H^2}\rho_2
+ \mu_{2G^{2,2}}$ & $\mu_{2H^2} \rho_2$ \\
\hline
$C$ & $\mu_{G^{2,1}}$ & $\mu_{G^{1,2}}$  & $-\Delta_1 - \Delta_2$ & $\mu_{H^2}\rho_2
+ \mu_{G^{1,1}}\mu_{G^{2,1}}$ & $\mu_{H^1}\rho_1 + \mu_{G^{1,2}}\mu_{G^{2,2}}$
& $0$\\
\hline  
\end{tabular}
\caption{coefficient table for $A,B,C$, $\Delta_1 = \delta_1 - \mu_{G^{1,1}}
>0$ and $\Delta_2 = \delta_2 - \mu_{G^{2,2}}>0$.}
\end{center}\label{tab: momentCoeffABC}
\end{table}

We denote 
\begin{eqnarray*}
m_2^{i} &=:&\E[(\lambda^i_t)^2] \\
 m_2^{i,j} &=:& \E[\lambda^i_t \lambda^j_t],
\end{eqnarray*}
then by (\ref{eqn: StatConditionForMoments}) we obtain the linear equation
system:
\begin{eqnarray*}
A_{1,1}m^1_2 + A_{1,2}m ^{1,2}_2 + \left( A_1 m_1^1 + A_2 m_1^2 + A_0 \right)
&=& 0\\
B_{2,2} m^2_2 + B_{1,2}m^{1,2}_2 + \left( B_2 m_1^2+ B_1 m_1^1 + B_0 \right)&=&
0\\
C_{1,1} m^1_2 + C_{2,2} m_2^2 + C_{1,2} m^{1,2}_2 + \left (C_1 m^1_1 + C_2
m^2_1 \right) &=& 0.
\end{eqnarray*}

Note that the cross term is
\begin{equation}\label{eqn: crossMoment12}
m_2^{1,2} = -\frac{C_1 m^1_1 + C_2 m^2_1 + C_{1,1} m^1_2 + C_{2,2} m_2^2
}{C_{1,2}}.
\end{equation}
Based on the first moments $m_1^1$, $m_1^2$ obtained in the last section,
we obtain the second moments $m_2^1$ and $m_2^2$ by solving the linear equation
system. 
\begin{eqnarray*}
\left(A_{1,1} - A_{1,2}\frac{C_{1,1}}{C_{1,2}} \right) m_2^1 -A_{1,2}\frac{C_{2,2}}{C_{1,2}}
m_2^2 + \left(\td{A}_0 - \frac{A_{1,2}}{C_{1,2}}\td{C}_0 \right) &=& 0\\
\left(B_{2,2}-B_{1,2}\frac{C_{2,2}}{C_{1,2}} \right)m_2^2 - B_{1,2}\frac{C_{1,1}}{C_{1,2}}
m_2^1 + \left(\td{B}_0 - \frac{B_{1,2}}{C_{1,2}}\td{C}_0 \right) &=& 0,
\end{eqnarray*}
where 
\begin{eqnarray*}
\td{A}_0 &=& A_1 m_1^1 + A_2 m_1^2 + A_0\\
\td{B}_0 &=& B_2 m_1^2+ B_1 m_1^1 + B_0\\
\td{C}_0 &=& C_1 m^1_1 + C_2 m^2_1.
\end{eqnarray*}

Denote $\gamma_1 := \frac{C_{1,1}}{C_{1,2}}$ and $\gamma_2 := \frac{C_{2,2}}{C_{1,2}}$,
then
\begin{eqnarray*}
\left[A_{1,1}-A_{1,2}\gamma_1 - \frac{A_{1,2}B_{1,2}\gamma_1\gamma_2}{B_{2,2}-B_{1,2}\gamma_2}\right]
m_2^1 
= -\frac{(\td{B}_0 - \frac{B_{1,2}}{C_{1,2}}\td{C}_0)A_{1,2}\gamma_2}{B_{2,2}-B_{1,2}\gamma_2}-\left(
\td{A}_0 -\frac{A_{1,2}}{C_{1,2}}\td{C}_0\right).
\end{eqnarray*}

\begin{eqnarray*}
m_2^1 &=& \frac{ -\frac{(\td{B}_0 - \frac{B_{1,2}}{C_{1,2}}\td{C}_0)A_{1,2}\gamma_2}{B_{2,2}-B_{1,2}\gamma_2}-\left(
\td{A}_0 - \frac{A_{1,2}}{C_{1,2}}\td{C}_0\right)}{A_{1,1}-A_{1,2}\gamma_1
- \frac{A_{1,2}B_{1,2}\gamma_1\gamma_2}{B_{2,2}-B_{1,2}\gamma_2}}= \frac{(B_{1,2}\gamma_2
-B_{2,2})\td{A}_0 - A_{1,2}\gamma_2 \td{B}_0 +
\frac{A_{1,2}}{C_{1,2}}B_{2,2}\td{C}_0}{4(\Delta_1\Delta_2 - \mu_{G^{1,2}}\mu_{G^{2,1}})}.
\end{eqnarray*}

Similarly, we have 
\begin{eqnarray*}
m_2^2 &=& \frac{-B_{1,2}\gamma_1\td{A}_0 - (A_{1,1}-A_{1,2}\gamma_1) \td{B}_0
+ \frac{B_{1,2}}{C_{1,2}}A_{1,1}\td{C}_0}{4(\Delta_1\Delta_2 - \mu_{G^{1,2}}\mu_{G^{2,1}})}.
\end{eqnarray*}

We obtain 
\begin{eqnarray*}
m_2^1 &=& (m_1^1)^2 + \gamma_{1,1} \rho_1 + \gamma_{1,2} \rho_2\\
m_2^2 &=& (m_1^2)^2 + \gamma_{2,1} \rho_1 + \gamma_{2,2} \rho_2,  
\end{eqnarray*}
where 
\begin{eqnarray*}
\gamma_{1,1} &=& \frac{1}{2\Delta}\left( \frac{-2\mu_{G^{2,1}}\mu_{G^{1,2}}}{\Delta_1
+ \Delta_2} + \Delta_2 \right)\left( \mu_{2G^{1,1}}\mu_{1,1} + \mu_{2G^{1,2}}\mu_{2,1}+\mu_{2H^1}
\right) \nonumber \\
&& + \frac{1}{2\Delta} \frac{(\mu_{G^{1,2}})^2}{\Delta_1 + \Delta_2} \left(
\mu_{2G^{2,2}}\mu_{2,1} + \mu_{2G^{2,1}}\mu_{1,1} \right)
 + \frac{1}{\Delta} \frac{\mu_{G^{1,2}}\Delta_2}{\Delta_1 + \Delta_2}\left(
\mu_{G^{1,1}}\mu_{G^{2,1}}\mu_{1,1} + \mu_{G^{1,2}}\mu_{G^{2,2}}\mu_{2,1}\right)
\nonumber\\
\gamma_{1,2} &=& \frac{1}{2\Delta}\left( \frac{-2\mu_{G^{2,1}}\mu_{G^{1,2}}}{\Delta_1
+ \Delta_2} + \Delta_2 \right)\left( \mu_{2G^{1,1}}\mu_{1,2} + \mu_{2G^{1,2}}\mu_{2,2}\right)
\nonumber\\
&& + \frac{1}{2\Delta} \frac{(\mu_{G^{1,2}})^2}{\Delta_1 + \Delta_2} \left(
\mu_{2G^{2,2}}\mu_{2,2} + \mu_{2G^{2,1}}\mu_{1,2} + \mu_{2H^2}   \right)
 + \frac{1}{\Delta} \frac{\mu_{G^{1,2}}\Delta_2}{\Delta_1 + \Delta_2}\left(
\mu_{G^{1,1}}\mu_{G^{2,1}}\mu_{1,2} + \mu_{G^{1,2}}\mu_{G^{2,2}}\mu_{2,2}\right).
\label{eqn: gamma_11} 
\end{eqnarray*}

Similarly, 
\begin{eqnarray*}
\gamma_{2,1} &=& \frac{1}{2\Delta}\left( \frac{-2\mu_{G^{2,1}}\mu_{G^{1,2}}}{\Delta_1
+ \Delta_2} + \Delta_1 \right)\left(\mu_{2G^{2,2}}\mu_{2,1} + \mu_{2G^{2,1}}\mu_{11}
\right) \nonumber \\
&& + \frac{1}{2\Delta} \frac{(\mu_{G^{2,1}})^2}{\Delta_1 + \Delta_2} \left(
\mu_{2G^{1,1}}\mu_{1,1} + \mu_{2G^{1,2}}\mu_{2,1} + \mu_{2H^1} \right)
 + \frac{1}{\Delta} \frac{\mu_{G^{2,1}}\Delta_1}{\Delta_1 + \Delta_2}\left(
\mu_{G^{1,1}}\mu_{G^{2,1}}\mu_{1,1} + \mu_{G^{1,2}}\mu_{G^{2,2}}\mu_{2,1}\right)
\nonumber\\
\gamma_{2,2} &=& \frac{1}{2\Delta}\left( \frac{-2\mu_{G^{2,1}}\mu_{G^{1,2}}}{\Delta_1
+ \Delta_2} + \Delta_1 \right)\left( \mu_{2G^{2,2}}\mu_{2,2} + \mu_{2G^{2,1}}\mu_{1,2}
+ \mu_{2H^2}\right) \nonumber\\
&& + \frac{1}{2\Delta} \frac{(\mu_{G^{2,1}})^2}{\Delta_1 + \Delta_2} \left(
\mu_{2G^{1,1}}\mu_{1,2} + \mu_{2G^{1,2}}\mu_{2,2} \right) 
 + \frac{1}{\Delta} \frac{\mu_{G^{2,1}}\Delta_1}{\Delta_1 + \Delta_2}\left(
\mu_{G^{1,1}}\mu_{G^{2,1}}\mu_{1,2} + \mu_{G^{1,2}}\mu_{G^{2,2}}\mu_{2,2}\right).
\label{eqn: gamma_22} 
\end{eqnarray*}

Hence,we conclude the stationary mean and variance are
\begin{eqnarray*}
m_1 &:=& \E[\lambda_t^1] = \mu_{1,1} \rho_1 + \mu_{1,2} \rho_2\\
m_2 &:=& \E[\lambda_t^2] =  \mu_{2,1} \rho_1 + \mu_{2,2} \rho_2
\end{eqnarray*}

and
\begin{eqnarray*}
v_1 &:=& \Var(\lambda_t^1) = \gamma_{1,1} \rho_1 + \gamma_{1,2} \rho_2\\
v_2 &:=& \Var(\lambda_t^2) =  \gamma_{2,1} \rho_1 + \gamma_{2,2} \rho_2
\end{eqnarray*}
with coefficients above.

\begin{remark}
From above, we observe that the stationary mean and variance are both linear
functions of $\rho_1$ and $\rho_2$.
\end{remark}

\subsection{Stationary Correlation} 

\begin{eqnarray*}
\rho_{1,2} = \frac{E[\lambda^1_t \lambda^2_t ] - \E[\lambda^1_t]\E[\lambda^2_t]}{\sqrt{\Var(\lambda^1_t)}\sqrt{\Var(\lambda^2_t)}}
=  \frac{m_2^{1,2} - m_1 m_2 }{\sqrt{v_1} \sqrt{v_2}},
\end{eqnarray*}
where $m^{1,2}_2$ is from (\ref{eqn: crossMoment12}).

Note that the stationary correlation is larger than processes with only self-exciting
jumps as cross-exciting jumps have positive mean $\mu_{G^{1,2}}$ and $\mu_{G^{2,1}}$.

\section{Conclusion}  \label{sec: Conclusion}

By using the the Markov theory on the branching system approximation, we
found the condition under which there exists a unique stationary distribution
of the BDCP intensity and the resulting BDCP has stationary increments. All
moments of the stationary intensity can be computed using the Markov property.
Moreover, we also obtained  the limiting and stationary distributions of
the approximating sequence of the intensity in terms of the Laplace transform
which is also useful in practice.

\bibliographystyle{apt}

\appendix

\section{Proof of Lemma}

\subsection{Proof of Lemma~\ref{lem: lfunction_asympt_finiteSystem}}\label{sec:
proof_lem_lfunction_asympt_finiteSystem}
\begin{proof}
First, we have
$$
\lim_{t\rr \infty}l_1(t) = \lim_{t \rr \infty} v_{2n}e^{-\delta_2 t} =
0,\quad \lim_{t\rr \infty}l_2(t) = \lim_{t \rr \infty} v_{2n-1} e^{-\delta_1
t}
= 0.$$

Then, assume $\lim_{t\rr \infty}l_{2k-1}(t) = 0$ and $\lim_{t\rr \infty}
l_{2k}(t) = 0$, then 
\begin{eqnarray*}
\lim_{t\rr \infty} l_{2k+1}(t) &=& \lim_{t\rr \infty}  e^{-\delta_2 t}\int_0^t
e^{\delta_2 s}\left[ 1-\hat{g}_{1,2}(l_{2k}(s))+1-\hat{g}_{2,2}(l_{2k-1}(s))\right]
ds \\
&\stackrel{L'Hospital}{=}& \lim_{t\rr \infty} \frac{1}{\delta_2} \left(1-\hat{g}_{1,2}(l_{2k}(t))+1-\hat{g}_{2,2}(l_{2k-1}(t))\right)\\
&=& 0. 
\end{eqnarray*}
Similarly, we have $\lim_{t\rr \infty} l_{2k+2}(t) = 0 $. 

Hence, by induction, we conclude that for any $i = 1,\dots,m$, $\lim_{t\rr
\infty}l_i(t) = 0
$.

\end{proof}

\subsection{Proof of Lemma~\ref{thm: statCondTruncSys}}\label{sec: ProofThm_statCondTruncSys}
\begin{proof}
By Proposition 9.2, Chapter 4 in Ethier and Kurtz \cite{EK1986}, the stationary
distribution $\mu$ exists if and only if for $f\in\mathcal{D}(\A)$, $\int
\A f d\mu = 0$. Hence we show the stationary distribution $\pi$ satisfies
\begin{eqnarray}\label{eqn: stationaryIntegralEqn}
\int \A_m f(\lambda_1,\lambda_2,\ldots,\lambda_m) \pi(\lambda_1,\lambda_2,\ldots,\lambda_m)
d\lambda_1\cdots d\lambda_n = 0.
\end{eqnarray}

We now derive the equivalent Laplace transform equation.

Part (I): drift part
\begin{eqnarray*}
&&\int_{\R_+^m} -\delta_k \lambda_k \frac{\partial }{\partial\lambda_k}f(\lambda_1,\ldots,\lambda_m)\pi(\lambda_1,\ldots,\lambda_m)d\lambda_1\cdots
d\lambda_m \\
&=&  -\delta_k \int_{\R_+^{m}}    \frac{\partial }{\partial\lambda_k}f(\lambda_1,\ldots,\lambda_m)\int_0^{\lambda_k}
\frac{\partial}{\partial \lambda_k}\left( x \pi(\lambda_1,\ldots,x,\ldots,\lambda_m)
\right)d x d\lambda_1\cdots d\lambda_m \\
&=& -\delta_k \int_{\R_+^{m-1}} \int_{\lambda_k = 0}^\infty \int_{x=0}^{\lambda_k}
  \frac{\partial }{\partial\lambda_k}f(\lambda_1,\ldots,\lambda_m) \frac{\partial}{\partial
\lambda_k}\left( x \pi(\lambda_1,\ldots,x,\ldots,\lambda_m) \right)d x d\lambda_1\cdots
d\lambda_m \\
&=& -\delta_k \int_{\R_+^{m-1}} \int_{x = 0}^\infty \int_{\lambda_k = x}^{\infty}
  \frac{\partial }{\partial\lambda_k}f(\lambda_1,\ldots,\lambda_m) \frac{\partial}{\partial
\lambda_k}\left( x \pi(\lambda_1,\ldots,x,\ldots,\lambda_m) \right)d x d\lambda_1\cdots
d\lambda_m \\
&=& \int_{\R_+^m} f(\lambda_1,\ldots,\lambda_m)  \delta_k \frac{\partial}{\partial
\lambda_k}\left( \lambda_k \pi(\lambda_1,\ldots,\lambda_m) \right)d\lambda_1\cdots
d\lambda_m,
\end{eqnarray*}
where we have used the fact that $f(\lambda_1,\ldots,\lambda_m)|_{\lambda_k
= \infty} = 0$.

Take the Laplace transform at $(v_1,\ldots,v_m)$ is
\begin{eqnarray*}
\mathcal{L}_m\left[  \delta_k \frac{\partial}{\partial \lambda_k}\left( \lambda_k
\pi(\lambda_1,\ldots,\lambda_m) \right) \right] =  \delta_k v_k \Lp_m [\lambda_k
\pi(\lambda_1,\ldots,\lambda_m) ] = -  \delta_k v_k \frac{\partial }{\partial
v_k}\hat{\pi}(v_1,\ldots,v_m).
\end{eqnarray*}

Part (II): shot-noise part
\begin{eqnarray*}
&& \int_{\R^m_+} \rho_1 \int_{y=0}^\infty f(\lambda_1+y,\lambda_2,\ldots,\lambda_m)dH_1(y)\pi(\lambda_1,\ldots,\lambda_m)
d\lambda_1\cdots d\lambda_m \\
&=&  \rho_1 \int_{\R^{m-1}_+} \int_{x=0}^\infty f(x,\lambda_2,\ldots,\lambda_m)
\int_{y=0}^{x} \pi(x-y,\lambda_2,\ldots,\lambda_m)dH_1(y) dx d\lambda_2\cdots
d\lambda_m \\
&=&  \rho_1 \int_{\R^{m}_+}  f(\lambda_1,\lambda_2,\ldots,\lambda_m) \int_{y=0}^{\lambda_1}
\pi(\lambda_1-y,\lambda_2,\ldots,\lambda_m)dH_1(y) d\lambda_1\cdots d\lambda_m,
\end{eqnarray*}

then
\begin{eqnarray*}
&& \int_{\R^m_+} \rho_1 \left[\int_0^\infty f(\lambda_1+y,\lambda_2,\ldots,\lambda_m)dH_1(y)
- f(\lambda_1,\ldots,\lambda_m) \right] \pi(\lambda_1,\ldots,\lambda_m) d\lambda_1\cdots
d\lambda_m \\
&=& \rho_1 \int_{\R^{m}_+}  f(\lambda_1,\lambda_2,\ldots,\lambda_m)\left[
\int_{y=0}^{\lambda_1} \pi(\lambda_1-y,\lambda_2,\ldots,\lambda_m)dH_1(y)
- \pi(\lambda_1,\ldots,\lambda_m)\right] d\lambda_1\cdots d\lambda_m.
\end{eqnarray*}

We have the Laplace transform as 
\begin{eqnarray*}
\mathcal{L}_m\left[  \int_{y=0}^{\lambda_1} \pi(\lambda_1-y,\lambda_2,\ldots,\lambda_m)dH_1(y)
 \right] = \hat{\pi}(v_1,\ldots,v_m) \hat{h}(v_1).  
\end{eqnarray*}

Similarly, we have 
\begin{eqnarray*}
\mathcal{L}_m\left[  \int_{y=0}^{\lambda_2} \pi(\lambda_1,\lambda_2-y,\ldots,\lambda_m)dH_1(y)
 \right] = \hat{\pi}(v_1,\ldots,v_m) \hat{h}(v_2).  
\end{eqnarray*}

Part(III): exciting part: 

For $k\geq 2$, the jump excited by $\lambda_{2k-1}$ is
\begin{eqnarray*}
&& \int_{\R^m_+} \lambda_{2k-1}\left[  \left(\int_{z=0}^\infty f(\cdot,\lambda_{2k+1}
+ z,\cdot)dG_{1,1}(z) - f(\ldots) \right) + \left(\int_{z=0}^\infty f(\cdot,\lambda_{2k+2}
+ z,\cdot)dG_{2,1}(z) - f(\ldots,) \right) \right]\\
&& \cdot\pi(\lambda_1,\ldots,\lambda_m) d\lambda_1\cdots d\lambda_m \\
&=&  \int_{\R^{m}_+}  f(\lambda_1,\ldots,\lambda_m) \lambda_{2k-1} \left[\int_{z=0}^{\lambda_{2k+1}}
\pi(\lambda_1,\ldots,\lambda_{2k+1}-z,\ldots,\lambda_m)dG_{1,1}(z) - \pi(\lambda_1,\ldots,\lambda_m)\right]
d\lambda_1\cdots d\lambda_n\\
&& + \int_{\R^{m}_+}  f(\lambda_1,\ldots,\lambda_m) \lambda_{2k-1} \left[\int_{z=0}^{\lambda_{2k+2}}
\pi(\lambda_1,\ldots,\lambda_{2k+2}-z,\ldots,\lambda_m)dG_{2,1}(z) - \pi(\lambda_1,\ldots,\lambda_m)\right]
d\lambda_1\cdots d\lambda_n\\
\end{eqnarray*}

We have first 
\begin{eqnarray*}
&& \mathcal{L}_m\left[ \lambda_{2k-1} \int_{z=0}^{\lambda_{2k+1}} \pi(\lambda_1,\ldots,\lambda_{2k+1}-z,\ldots,\lambda_m)dG_{1,1}(z)
-\lambda_{2k-1} \pi(\lambda_1,\ldots,\lambda_m)  \right]\\
&=& \mathcal{L}_m \left[\lambda_{2k-1}\pi(\lambda_1,\ldots,\ldots,\lambda_m)
\right]  \hat{g}_{1,1}(v_{2k+1}) - \mathcal{L}_m \left[\lambda_{2k-1}\pi(\lambda_1,\ldots,\lambda_m)
\right] \\
&=&\frac{\partial}{\partial v_{2k-1}} \hat{\pi}(v_1,\ldots,v_m)(1- \hat{g}_{1,1}(v_{2k+1})).
\end{eqnarray*}

Similarly,
\begin{eqnarray*}
&& \mathcal{L}_m\left[ \lambda_{2k-1} \int_{z=0}^{\lambda_{2k+2}} \pi(\lambda_1,\ldots,\lambda_{2k+2}-z,\ldots,\lambda_m)dG_{2,1}(z)
-\lambda_{2k-1} \pi(\lambda_1,\ldots,\lambda_m)  \right]\\
&=&\frac{\partial}{\partial v_{2k-1}} \hat{\pi}(v_1,\ldots,v_m)(1- \hat{g}_{2,1}(v_{2k+2})).
\end{eqnarray*}

The same for jumps excited by $\lambda_{2k}$ corresponding to 
\begin{eqnarray*}
&&\int_{\R^m_+} \lambda_{2k}\left[  \left(\int_{z=0}^\infty f(\cdot,\lambda_{2k+1}
+ z,\cdot)dG_{1,2}(z) - f(\ldots) \right) + \left(\int_{z=0}^\infty f(\cdot,\lambda_{2k+2}
+ z,\cdot)dG_{2,2}(z) - f(\ldots,) \right) \right]\nonumber\\
&&\cdot\pi(\lambda_1,\ldots,\lambda_m) d\lambda_1\cdots d\lambda_m.
\end{eqnarray*}

Since stationary distribution $\pi$ satisfies (\ref{eqn: stationaryIntegralEqn}),
we have from part (I), (II), (III) that
\begin{eqnarray*}
0 &=& \sum_{k=1}^m\delta_k \frac{\partial}{\partial \lambda_k}\left( \lambda_k
\pi(\lambda_1,\ldots,\lambda_m) \right)\\
&& + \rho_1 \left[ \int_{y=0}^{\lambda_1} \pi(\lambda_1-y,\lambda_2,\ldots,\lambda_m)dH_1(y)
- \pi(\lambda_1,\ldots,\lambda_m)\right]\\
&& + \rho_2 \left[ \int_{y=0}^{\lambda_2} \pi(\lambda_1,\lambda_2-y,\ldots,\lambda_m)dH_2(y)
- \pi(\lambda_1,\ldots,\lambda_m)\right]\\
&& +  \sum_{k=1}^{n-1}\lambda_{2k-1} \left[\int_{z=0}^{\lambda_{2k+1}} \pi(\lambda_1,\ldots,\lambda_{2k+1}-z,\ldots,\lambda_m)dG_{1,1}(z)
- \pi(\lambda_1,\ldots,\lambda_m)\right]\\
&& +  \sum_{k=1}^{n-1}\lambda_{2k-1} \left[\int_{z=0}^{\lambda_{2k+2}} \pi(\lambda_1,\ldots,\lambda_{2k+2}-z,\ldots,\lambda_m)dG_{2,1}(z)
- \pi(\lambda_1,\ldots,\lambda_m)\right]\\
&& + \sum_{k=1}^{n-1}\lambda_{2k} \left[\int_{z=0}^{\lambda_{2k+1}} \pi(\lambda_1,\ldots,\lambda_{2k+1}-z,\ldots,\lambda_m)dG_{1,2}(z)
- \pi(\lambda_1,\ldots,\lambda_m)\right]\\
&& +  \sum_{k=1}^{n-1}\lambda_{2k} \left[\int_{z=0}^{\lambda_{2k+2}} \pi(\lambda_1,\ldots,\lambda_{2k+2}-z,\ldots,\lambda_m)dG_{2,2}(z)
- \pi(\lambda_1,\ldots,\lambda_m)\right].
\end{eqnarray*}

In terms of Laplace transform, we have for any $(v_1,\ldots, v_m) \in \R^m_+$
\begin{eqnarray*}
0 &=& -\sum_{k=1}^{2n} \delta_k v_k \frac{\partial \hat{\pi}^m_S}{\partial
v_k}+ \rho_1(\hat{h}(v_1)-1) +\rho_2(\hat{h}(v_2)-1) \nonumber \\
&& + \sum_{k=1}^{n-1} \frac{\partial \hat{\pi}^m_S}{\partial v_{2k-1}} \left[(1-\hat{g}_{1,1}(v_{2k+1}))
+ (1-\hat{g}_{2,1}(v_{2k+2})) \right] 
+ \sum_{k=1}^{n-1} \frac{\partial \hat{\pi}^m_S}{\partial v_{2k}} \left[(1-\hat{g}_{1,2}(v_{2k+1}))
+ (1-\hat{g}_{2,2}(v_{2k+2})) \right]. \nonumber \\
\end{eqnarray*}

Reorder the terms, we have (\ref{eqn: StatEq_v1}).
\end{proof}

\subsection{Proof of (\ref{eqn: distance_function})}\label{sec:
Proof_dRecursiveDominated}\label{sec: proof_distanceFunction}
\begin{proof}

For $j=1,2$,
\begin{eqnarray*}
&&\hat{g}_{1,j}(l_{2k-2}(t))- \hat{g}_{1,j}(l_{2k}(t)) = \int_{l_{2k}(t)}^{l_{2k-2}(t)}
d\hat{g}_{1,j}(u) = \int_{l_{2k-2}(t)}^{l_{2k}(t)}\left(-\hat{g}'_{1,j}(u)\right)du
\leq \mu_{G^{1,j}} \left( l_{2k}(t) - l_{2k-2}(t)\right)\\
&&\hat{g}_{2,j}(l_{2k-3}(t))- \hat{g}_{2,j}(l_{2k-1}(t)) = \int_{l_{2k-1}(t)}^{l_{2k-3}(t)}
d\hat{g}_{2,j}(u) = \int_{l_{2k-3}(t)}^{l_{2k-1}(t)}\left(-\hat{g}'_{2,j}(u)\right)du
\leq \mu_{G^{2,j}} \left( l_{2k-1}(t) - l_{2k-3}(t)\right)\\
\end{eqnarray*}
Then,
\begin{eqnarray*}
d^{(1)}_{k+1}(t) &=&  e^{-\delta_2 t} \int_0^t e^{\delta_2 s} \left[ \left(1-
\hat{g}_{1,2}(l_{2k}(s)) \right) - \left(1- \hat{g}_{1,2}(l_{2k-2}(s)) \right)
\right] ds \\ 
&& +  e^{-\delta_2 t} \int_0^t e^{\delta_2 s} \left[ \left(1- \hat{g}_{2,2}(l_{2k-1}(s))
\right) - \left(1- \hat{g}_{2,2}(l_{2k-3}(s)) \right) \right] ds\\
&\leq & e^{-\delta_2 t} \int_0^t e^{\delta_2 s} \left[ \mu_{G^{2,2}} d^{(1)}_k(s)+
\mu_{G^{1,2}}d^{(2)}_k(s) \right] ds \\
d^{(2)}_{k+1}(t) &=&  e^{-\delta_1 t} \int_0^t e^{\delta_1 s} \left[ \left(1-
\hat{g}_{1,1}(l_{2k}(s)) \right) - \left(1- \hat{g}_{1,1}(l_{2k-2}(s)) \right)
\right] ds \\ 
&& +  e^{-\delta_1 t} \int_0^t e^{\delta_1 s} \left[ \left(1- \hat{g}_{2,1}(l_{2k-1}(s))
\right) - \left(1- \hat{g}_{2,1}(l_{2k-3}(s)) \right) \right] ds \\
&\leq & e^{-\delta_1 t} \int_0^t e^{\delta_1 s} \left[ \mu_{G^{2,1}} d^{(1)}_k(s)
+ \mu_{G^{1,1}}d^{(2)}_k(s) \right] ds 
\end{eqnarray*}
\end{proof}

\end{document}